\documentclass[11pt]{article}

\usepackage[margin=0.95in]{geometry}
\usepackage{amsmath,amssymb,amsthm,mathtools}
\usepackage{booktabs,longtable,array,enumitem,pdflscape}
\usepackage{microtype}
\usepackage{etoolbox}
\usepackage{xcolor}
\usepackage[colorlinks=true,linkcolor=blue,citecolor=blue,urlcolor=blue]{hyperref}
\usepackage{authblk}

\makeatletter
\patchcmd{\l@section}{1.5em}{2.1em}{}{}
\renewcommand*\l@subsection{\@dottedtocline{2}{1.5em}{3.2em}}
\renewcommand*\l@subsubsection{\@dottedtocline{3}{4.7em}{3.8em}}
\makeatother

\newtheorem{theorem}{Theorem}[section]
\newtheorem{lemma}[theorem]{Lemma}

\newtheorem{corollary}[theorem]{Corollary}
\theoremstyle{definition}
\newtheorem{definition}[theorem]{Definition}
\newtheorem{example}[theorem]{Example}

\newcommand{\C}{\mathbb C}
\newcommand{\R}{\mathbb R}
\newcommand{\Q}{\mathbb Q}
\newcommand{\Z}{\mathbb Z}

\newcommand{\T}{\mathbb T}
\newcommand{\e}{\mathrm e}
\newcommand{\ii}{\mathrm i}
\newcommand{\wt}{\operatorname{wt}}
\newcommand{\supp}{\operatorname{supp}}
\newcommand{\rank}{\operatorname{rank}}
\newcommand{\diag}{\operatorname{diag}}
\newcommand{\aff}{\operatorname{aff}}

\newcommand{\spanR}{\operatorname{span}_{\R}}
\newcommand{\Span}{\operatorname{span}}
\newcommand{\tr}{\operatorname{tr}}

\newcommand{\ket}[1]{\lvert #1\rangle}
\newcommand{\bra}[1]{\langle #1\rvert}

\newcommand{\abs}[1]{\lvert #1\rvert}
\newcommand{\norm}[1]{\lVert #1\rVert}

\title{Finite Classification of Transversal Diagonal Gates\\ Beyond Stabilizer Codes}
\author[1]{Xi He}
\author[2]{Chao Zhang}
\author[2]{Shilin Huang}
\author[1]{Bei Zeng\thanks{Corresponding author: \href{mailto:bei.zeng@utdallas.edu}{bei.zeng@utdallas.edu}}}
\affil[1]{Department of Physics, The University of Texas at Dallas, Richardson, Texas 75080, USA}
\affil[2]{Department of Physics, The Hong Kong University of Science and Technology,\protect\\
Clear Water Bay, Kowloon, Hong Kong SAR, China}
\date{}
 
\begin{document}
\maketitle

\begin{abstract} 
We establish that transversal diagonal gates on exact qubit codes of distance at least two can be exhaustively classified within a finite search space. Although code-preserving physical gates can have continuous phases, any projective logical action induced by a transversal diagonal gate can always be realized by a physical gate of finite order. Crucially, we prove that this physical order has an upper bound determined solely by the code parameters. This order bound enables the subset-sum--linear-programming (SSLP) framework introduced in \href{https://arxiv.org/abs/2504.20847}{arXiv:2504.20847} to exhaustively enumerate all candidate discrete symmetries, after which the Knill--Laflamme equations determine the remaining continuous code parameters. Applying this method to the nontrivial diagonal sector of exact $((6,2,3))$ codes, SSLP yields $66$ candidate symmetry classes, of which exactly $15$ are realizable, supporting full projective diagonal logical cyclic groups of orders $2$, $3$, and $5$. In particular, for order $5$, the code is unique up to local unitaries and qubit permutations, equivalent to the code constructed in \href{https://arxiv.org/abs/2504.20847}{arXiv:2504.20847}. This code also attains the minimum length $n=6$ among exact $((n, 2, \ge 3))$ codes whose full projective transversal group cannot be conjugated into the projective Clifford group.

\end{abstract}

\section{Introduction}
\label{sec:introduction}

Fault-tolerant quantum computation requires both protection against physical
noise and operations that preserve that protection.  Transversal gates
address the second requirement by acting independently on the physical
subsystems of a code block, preventing a single fault from spreading within
the block \cite{Shor1996FaultTolerant,Gottesman1997Stabilizer}.  However,
transversal gates alone cannot provide universal quantum computation on a
fixed error-detecting code: their combinations cannot approximate every
logical unitary.  This obstruction holds for additive quantum codes
\cite{ZengCrossChuang2011} and, more generally, for arbitrary exact codes
that detect errors on each individual physical subsystem
\cite{EastinKnill2009}.  Additional structure yields sharper constraints,
including dimensional restrictions on locality-preserving gates of
topological stabilizer codes \cite{BravyiKoenig2013,PastawskiYoshida2015}
and Clifford-hierarchy restrictions on transversal diagonal gates of qubit
stabilizer codes \cite{AndersonJochymOConnor2016}.  These results leave a
complementary classification problem: \emph{which nontrivial transversal
logical symmetries can exact quantum codes support, and which codes
realize them?}

For stabilizer codes, the classification problem benefits from a finite
algebraic description in terms of Pauli operators and their commutation
relations \cite{Gottesman1997Stabilizer,CalderbankRainsShorSloane1998}.
This structure supports both classification theorems and exhaustive
catalogues, including the enumeration of binary stabilizer codes through
nine qubits up to local Clifford transformations and qubit permutations
\cite{CrossVandeth2025}. For CSS and stabilizer codes, algebraic conditions
and algorithms also characterize diagonal code-preserving gates and their
induced logical actions
\cite{HuLiangCalderbank2022,WebsterQuintavalleBartlett2023}. General quantum
codes require a broader description: the code subspace can vary
continuously within the physical Hilbert space. The Knill--Laflamme
conditions impose nonlinear constraints on this choice, while an unknown
transversal symmetry introduces additional continuous parameters through
its local unitary factors. An exhaustive classification must therefore
control both the continuous geometry of the codes and the possible
physical gates that preserve them.

Nonadditive codes make this classification problem particularly rich.
Early constructions exceeded the dimensions attainable by stabilizer codes
at fixed length and distance
\cite{RainsHardinShorSloane1997,YuChenLaiOh2008}, while distance-two bounds
and constructions revealed further departures from additive behavior
\cite{Rains1999DistanceTwo}. Graphical and codeword-stabilized descriptions
\cite{YuChenOh2007Graphical,CrossSmithSmolinZeng2009,ChuangCrossSmithSmolinZeng2009},
constructions from nonlinear classical codes \cite{GrasslRoetteler2008},
and permutation-invariant codes \cite{PollatsekRuskai2004,Ouyang2014}
have provided systematic ways to explore this larger space. The XP
formalism extends stabilizer descriptions using finite-order phase
rotations and supplies algorithms for determining codespaces and logical
operators \cite{WebsterBrownBartlett2022}. These developments also reveal
new possibilities for transversal gates: nonadditive codes realize
binary-icosahedral transversal symmetry and generalized phase gates
\cite{KubischtaTeixeira2023,KubischtaTeixeira2025}, while constructions based
on twisted unitary \(t\)-groups connect prescribed transversal symmetries
with quantum error correction \cite{KubischtaTeixeira2024}. At the same
time, continuous families of exact \(((6,2,3))\) and \(((7,2,3))\) codes
show that even small code parameters support substantial variation beyond
a finite stabilizer catalogue \cite{DuZhangPoonZeng2026}. An exhaustive
framework must accommodate these continuous families while determining
which transversal symmetries they can support.

The subset-sum--linear-programming (SSLP) framework introduced in
Ref.~\cite{ZhangWuHuangZeng2025} suggests a way to separate the discrete
symmetry data from the continuous code parameters. Once a finite-order
diagonal product gate is fixed, its eigenspaces partition the computational
basis into modular subset-sum sectors. Diagonal error detection supplies
linear moment constraints on probability distributions over these sectors,
and the remaining Knill--Laflamme equations form an exact completion
problem for the complex amplitudes. This framework produced a
\(((6,2,3))\) code admitting a cyclic projective logical group \(C_5\)
of order five, realized by diagonal transversal gates, together with
\(((7,2,3))\) codes admitting transversal \(T\) and \(\sqrt T\) gates
\cite{ZhangWuHuangZeng2025}. These constructions leave open the question of
completeness. SSLP begins with a finite-order physical gate, but a
code-preserving diagonal product unitary can have infinite physical order
even when its projective logical action has finite order. Even the least
physical order \(q_{\min}\) realizing a logical action can exceed its
projective order \(m\). An exhaustive classification using SSLP therefore
requires two ingredients: every relevant logical action must have a
finite-order physical representative, and the required physical order must
be bounded uniformly at fixed code parameters.

Our main result, established in Section~\ref{sec:sslp-classification}, is an
\emph{exhaustive SSLP normal form} for exact qubit codes of distance at least
two admitting a nontrivial transversal diagonal logical action.
Every such code admits an SSLP description: the eigenspaces of the chosen
logical action are supported on labeled modular subset-sum sectors, and
the code amplitudes satisfy the corresponding Knill--Laflamme equations.
Conversely, every exact completion of these data defines a code of
at least the target distance with the prescribed logical action.
This equivalence establishes the completeness of the SSLP framework,
without any stabilizer or additivity assumption, and includes
continuous families of exact codes. 

We further prove that this normal form requires only finitely many
choices of symmetry data at fixed code parameters. In a fixed product
basis, the projective diagonal logical group is finite abelian, and
every element has a finite-order physical representative, even when
the original physical gate has infinite order. We bound the required
physical order uniformly in the code parameters and obtain a stronger
bound for one logical qubit at distance at least three. Together,
these results reduce exhaustive classification to a finite enumeration
of discrete symmetry data followed by an exact algebraic completion
problem. These structural results were derived by the human
authors and subsequently verified in Lean, with large language models
(LLMs) writing the proof code under human guidance.

As an application of this framework, Section~\ref{six:sec:scope} classifies
the transversal diagonal symmetry data of exact \(((6,2,3))\) codes. Within
the finite physical-order cutoff, the subset-sum (SS) and
linear-programming (LP) stages yield \(66\) candidate classes of labeled
subset-sum data, modulo local bit flips, qubit permutations, and interchange
of the two logical labels, with the physical and logical order labels
retained. At the full Knill--Laflamme (KL) stage, we reuse the TeXRA-based
multi-agent workflow and independent Lean verification agent developed in
Ref.~\cite{HeLuZeng2026} to resolve the remaining coupled polynomial
constraints on code amplitudes and phases. The agents derive exact no-go
proofs excluding \(51\) classes and explicit solutions establishing
realizability for the remaining \(15\). The human authors audited these
results. These classes organize the allowed symmetry data: a realizable
class can contain continuous families of codes, and the same code can
occur in several classes. 

Classification of their code realizations up to
local unitaries is a further problem. Across all realizations, the possible
nontrivial full projective diagonal logical groups are precisely \(C_2\),
\(C_3\), and \(C_5\), where \(C_m\) denotes the cyclic group of order \(m\).
We determine the complete sets of attainable weight
enumerators for thirteen of the fifteen realizable classes. The order-five
sector admits the stronger conclusion of a complete code-level
classification: every exact \(((6,2,3))\) code admitting such a diagonal
logical action is equivalent, up to local unitaries and qubit permutations,
to the order-five code of Ref.~\cite{ZhangWuHuangZeng2025}. This code also
attains the minimum length of six for an exact code encoding one logical
qubit at distance at least three and supporting a single-block transversal
non-Clifford gate. The minimum-length statement is understood in a
logical-basis-independent sense: the full projective transversal logical
group cannot be conjugated into the projective Clifford group. The
Supplementary Materials and supporting code and data are available in the
companion GitHub repository~\cite{SSLPClassificationSupplement}.

\section{Preliminaries}
\label{sec:preliminaries}

To formulate the classification problem precisely, we first introduce the
notions of exact error detection, projective logical action, and SSLP data.
The Knill--Laflamme conditions govern exact error detection, while the
projective viewpoint distinguishes a logical action from its physical
representatives.  SSLP then separates the search into discrete subset-sum
data and continuous exact-completion constraints.  Throughout, we work with
exact, finite-dimensional qubit codes, without assuming additivity or
stabilizer structure unless explicitly stated.

\subsection{Quantum codes and exact error correction}
\label{sec:qec-preliminaries}

We begin with the form of the Knill--Laflamme conditions needed later.  In
particular, separating their diagonal and off-diagonal matrix elements will
allow us to distinguish the linear moment constraints used by SSLP from the
remaining exact completion equations.

The physical Hilbert space of \(n\) qubits is
\(
  \mathcal H_n=(\C^2)^{\otimes n}
\).
Let
\(
  \mathcal P_n=\{I,X,Y,Z\}^{\otimes n}
\)
denote the phase-free Pauli basis, and let \(\wt(E)\) be the number of
nonidentity tensor factors of \(E\in\mathcal P_n\).  A quantum code is a
subspace
\begin{equation}
  \mathcal Q\subseteq\mathcal H_n,
  \qquad
  \dim\mathcal Q=K,
\end{equation}
with orthogonal projector \(P\).  We write \(((n,K,d))\) for a general
qubit code.  Unlike the stabilizer notation \([[n,k,d]]\), this notation does
not require \(K=2^k\).

For a finite error set \(\{E_\alpha\}\), exact correction is possible if and
only if there is a Hermitian matrix \((c_{\alpha\beta})\) such that
\begin{equation}
  P E_\alpha^\dagger E_\beta P
  =c_{\alpha\beta}P
  \qquad\text{for all }\alpha,\beta.
  \label{eq:general-KL}
\end{equation}
These are the Knill--Laflamme conditions
\cite{KnillLaflamme1997}.  Equivalently, the code has distance at least
\(d_0\) precisely when
\begin{equation}
  PEP=\lambda_E P
  \label{eq:KL}
\end{equation}
for every \(E\in\mathcal P_n\) with \(\wt(E)<d_0\).  Thus a distance-
\(d\) code detects all errors on fewer than \(d\) qubits and corrects all
errors on at most \(\lfloor(d-1)/2\rfloor\) qubits.  We use ``exact'' to
emphasize that Eqs.~\eqref{eq:general-KL}--\eqref{eq:KL} are imposed as
equalities, rather than approximately.

In any orthonormal logical basis \(\{\ket{r_L}\}_{r=0}^{K-1}\),
Eq.~\eqref{eq:KL} is equivalent to
\begin{equation}
  \bra{r_L}E\ket{s_L}=\lambda_E\delta_{rs},
  \qquad 0\le r,s<K,\quad \wt(E)<d_0.
  \label{eq:logical-basis-KL}
\end{equation}
Thus every such error has equal diagonal matrix elements on all logical
basis states and vanishing off-diagonal matrix elements.  This separation is
the bridge to SSLP: the diagonal requirements become moment equations for
squared amplitudes, while the off-diagonal requirements remain part of the
exact complex-amplitude completion.

The quantum Singleton bound \cite{KnillLaflamme1997} gives
\begin{equation}
  K\le 2^{n-2(d-1)}.
  \label{eq:quantum-Singleton}
\end{equation}
In particular, a six-qubit code encoding one qubit has \(d\le3\).  We will
therefore certify exact \(((6,2,3))\) codes by imposing Eq.~\eqref{eq:KL} for
all Pauli operators of weight at most two.

To compare the exact codes obtained after completion, we use invariants that
do not depend on a chosen logical basis.  The Shor--Laflamme weight
enumerators of a rank-$K$ code projector $P$ are
\cite{ShorLaflamme1997,Rains1998WeightEnumerators}
\begin{align}
 A_w(P)
 &:=
 \frac1{K^2}
 \sum_{\substack{E\in\mathcal P_n\\\wt(E)=w}}
 \abs{\tr(EP)}^2,
 \label{eq:A-enumerator}\\
 B_w(P)
 &:=
 \frac1K
 \sum_{\substack{E\in\mathcal P_n\\\wt(E)=w}}
 \tr(EPE^\dagger P).
 \label{eq:B-enumerator}
\end{align}
Their homogeneous enumerator polynomials are
\begin{equation}
 A_P(x,y):=\sum_{w=0}^n A_w(P)x^{n-w}y^w,
 \qquad
 B_P(x,y):=\sum_{w=0}^n B_w(P)x^{n-w}y^w,
 \label{eq:enumerator-polynomials}
\end{equation}
and the quantum MacWilliams identity reads
\begin{equation}
 B_P(x,y)=\frac{K}{2^n}A_P(x+3y,x-y).
 \label{eq:quantum-MacWilliams}
\end{equation}
For a distance-$d$ code, the Knill--Laflamme conditions imply
\begin{equation}
 A_w(P)=B_w(P),
 \qquad 0\le w<d.
 \label{eq:enumerator-distance}
\end{equation}

The finite search and the final code classification use different notions
of equivalence, so we distinguish them at the outset.  Two code projectors
$P$ and $P'$ are \emph{local-unitary equivalent}, with
physical-qubit permutations allowed, when
\begin{equation}
 P'
 =
 \Pi_\sigma
 \left(\bigotimes_{j=1}^nU_j\right)
 P
 \left(\bigotimes_{j=1}^nU_j^\dagger\right)
 \Pi_\sigma^\dagger
 \label{eq:LU-equivalence}
\end{equation}
for one-qubit unitaries $U_j$ and a physical-qubit permutation
$\Pi_\sigma$.  We also consider the more restrictive equivalence that
allows only single-qubit bit flips and diagonal phase gates, together with
qubit permutations.  In that case $U_j=\Lambda_jX^{b_j}$, where
$\Lambda_j$ is diagonal unitary and $b_j\in\{0,1\}$.  This is often called
\emph{local-monomial equivalence}.  It implies local-unitary equivalence.
The Shor--Laflamme enumerators are invariant under both relations, but
equality of enumerators does not in general imply either equivalence.

We label the computational basis by bit strings
\(x=(x_1,\ldots,x_n)\in\{0,1\}^n\).  For
\(
  \ket\psi=\sum_x c_x\ket{x}
\), write
\(
  \supp(\psi)=\{x:c_x\ne0\}
\).  The distinction between the full set of allowed bit strings and the
actual support will be important in the SSLP completion step.
On bit strings, local $X$ operations and physical-qubit permutations act as
$x\mapsto\sigma(x)\oplus b$, where $b\in\{0,1\}^n$ and $\oplus$ denotes
bitwise addition with $1+1=0$.  These bit flips and coordinate permutations
are used to identify equivalent subset-sum data in
Section~\ref{six:sec:scope}; diagonal local phases leave the supports unchanged.
They will also provide the canonical normalizations used to choose
representatives of SSLP symmetry data.

\subsection{Transversal gates and restrictions on universality}
\label{sec:transversal-preliminaries}
\label{sec:transversal-no-go}

Our classification concerns logical symmetries, whereas SSLP is formulated
in terms of physical product gates.  We must therefore distinguish the
logical action of a transversal gate from the order and form of its physical
representatives.

A single-block transversal unitary is a product
\begin{equation}
  U=\bigotimes_{j=1}^n U_j,
  \qquad U_j\in U(2).
  \label{eq:transversal-product}
\end{equation}
More generally, a multiblock transversal operation acts independently on
corresponding coordinates of the blocks.  This factorization prevents one
fault from spreading to several coordinates within the same code block and
is one of the basic mechanisms of fault-tolerant quantum computation
\cite{Shor1996FaultTolerant}.  The present paper concerns the single-block
form \eqref{eq:transversal-product}.

A physical unitary \(U\) preserves the code when
\begin{equation}
  UPU^\dagger=P.
  \label{eq:code-preserving-unitary}
\end{equation}
It then induces a logical unitary \(U|_{\mathcal Q}\).  An overall phase
has no physical effect, so we identify logical unitaries that differ only
by such a phase.  Write $U(1)=\{z\in\C:|z|=1\}$ for the group of scalar
phases.  We write \(\overline U\) for the equivalence class of
\(U|_{\mathcal Q}\) under this identification.
Its \emph{projective logical order} is the least positive number of
repetitions that returns the logical operation to an overall phase times
the identity, if such a number exists.  In contrast, the physical order
requires the physical operator itself to return to the identity.  A
\emph{physical representative} of \(\overline U\) is a code-preserving
physical gate that induces that logical action up to an overall phase.
The action is \emph{projectively nontrivial} when it does more than multiply
every code state by the same phase.

After fixing one local product basis, a transversal diagonal unitary can be
written, up to a scalar phase, as
\begin{equation}
  D(\theta):=\bigotimes_{j=1}^n
  \diag(1,\e^{2\pi\ii\theta_j}),
  \qquad \theta\in\R^n.
  \label{eq:Dtheta}
\end{equation}
Each parameter \(\theta_j\) measures a phase in full turns: its phase angle
is \(2\pi\theta_j\).  Adding any whole number of turns leaves the gate
unchanged, so we can choose \(0\leq\theta_j<1\).
The word ``diagonal'' is basis dependent.  Local changes of basis may
diagonalize one commuting family of product gates, but need not
simultaneously diagonalize noncommuting transversal generators.

In the chosen product basis, the \emph{projective diagonal logical group} is
\begin{equation}
  \mathcal G_{\mathrm{diag}}(P)
  :=
  \left\{
  \overline{D(\theta)}:
  \theta\in\R^n,\quad
  D(\theta)PD(\theta)^\dagger=P
  \right\}.
  \label{eq:Gdiag}
\end{equation}
This group is abelian because diagonal physical gates commute.  For one
logical qubit (\(K=2\)), their logical restrictions share an eigenbasis, and
each projective logical action is determined by the ratio of its two
eigenvalues.  Consequently, if the group is finite, it is cyclic:
\[
  \mathcal G_{\mathrm{diag}}(P)\cong C_m,
  \qquad m=\abs{\mathcal G_{\mathrm{diag}}(P)}.
\]

A logical gate set is \emph{universal} if its products approximate arbitrary
logical unitaries up to global phase.  Transversal gates alone cannot
provide universality for stabilizer and subsystem stabilizer codes under
the conditions of Refs.~\cite{ZengCrossChuang2011,ChenChungCrossZengChuang2008},
including operations across multiple code blocks.  More generally, any
nontrivial finite-dimensional code that exactly detects every error on each
individual physical subsystem has a finite projective logical group of
code-preserving product unitaries \cite{EastinKnill2009}.  Individual
non-Clifford transversal gates remain possible.

Additional assumptions yield stronger restrictions.  For two-dimensional
topological stabilizer codes of sufficiently large distance, constant-depth
geometrically local logical circuits act within the Clifford group
\cite{BravyiKoenig2013}.  For qubit stabilizer codes, diagonal logical gates
implemented by transversal diagonal unitaries belong to the Clifford
hierarchy \cite{AndersonJochymOConnor2016}.

These results strongly constrain the possible logical actions, but they do
not yet provide a finite-order physical representative or a uniform bound
on its order.  Without those two ingredients, an exhaustive SSLP search
would still have an unbounded continuous input.  Closing precisely this gap
is the purpose of Section~\ref{sec:sslp-classification}.

\subsection{The SSLP framework}
\label{sec:sslp-preliminaries}

Once a finite-order physical representative is available, subset-sum linear
programming (SSLP), proposed in Ref.~\cite{ZhangWuHuangZeng2025}, turns the
problem for transversal diagonal gates into three layers: modular subset-sum solution
sets (SS), linear programming for diagonal moments (LP), and exact
completion of the remaining Knill--Laflamme equations (KL).  The key point
is that the first layer is discrete, the second is convex, and only the
third retains the full complex-amplitude problem.  By \emph{exact
completion}, we mean choosing complex amplitudes and phases that satisfy all
of the error-correction equations.
We describe these layers for one logical qubit, emphasizing the
distance-three conditions used in the six-qubit application.
Section~\ref{sec:sslp-normal-form} gives the general-rank formulation.

The construction starts with a prescribed diagonal product gate whose phase
angles are rational multiples of $2\pi$.  Such a gate has finite order and
can be written as
\begin{equation}
 D(a;q):=\bigotimes_{j=1}^n\diag(1,\omega_q^{a_j}),
 \qquad
 \omega_q:=\e^{2\pi\ii/q},
 \qquad
 a\in\Z_q^n.
 \label{eq:Daq}
\end{equation}
Here \(q\ge2\) is an integer and \(\Z_q:=\Z/q\Z\).  This is the special
case \(\theta=a/q\) of Eq.~\eqref{eq:Dtheta}; the existence of a finite
representative for an arbitrary projectively nontrivial code-preserving
diagonal action under detection of every \(Z_j\) is proved in
Section~\ref{sec:finite-projective-group}.
The representation is in \emph{reduced form} when $q,a_1,\ldots,a_n$ have
no common factor greater than one, that is,
\begin{equation}
 \gcd(q,a_1,\ldots,a_n)=1.
 \label{eq:primitive}
\end{equation}
The displayed physical product gate then has exact order \(q\): applying it
\(q\) times gives the identity, and no smaller positive number does so.
Reduced form does not mean that \(q\) is the smallest physical order among
all representatives of the same logical action.  If its induced
projective logical action has order \(m\), then \(m\mid q\), but in general
\(m\ne q\).  For a projective logical element \(\overline U\) admitting a
finite-order physical representative, define
\begin{equation}
 q_{\min}(\overline U)
 :=\min\{q:\ \overline U\text{ has a representative in reduced form }D(a;q)\}.
 \label{eq:qmin}
\end{equation}
The minimum is taken for the fixed code and the chosen local product basis.

The diagonal action on \(\ket{x}\), for \(x\in\{0,1\}^n\), is governed by
the modular subset sum
\begin{equation}
 A_a(x):=a\cdot x=\sum_{j=1}^n a_jx_j\pmod q
 \label{eq:subset-sum}
\end{equation}
and, for each residue \(b\), the corresponding subset-sum solution set
\begin{equation}
 S_b(a;q):=\{x\in\{0,1\}^n:A_a(x)\equiv b\pmod q\}.
 \label{eq:fiber}
\end{equation}
Indeed, \(D(a;q)\ket{x}=\omega_q^{A_a(x)}\ket{x}\).  Suppose this gate
preserves a rank-two code and induces a projective logical action of order
\(m>1\).  Its two logical eigenvectors have supports contained in distinct
subset-sum solution sets.
Up to local \(X\) conjugation, physical-qubit permutation, a common scalar,
logical-state interchange, and replacing the gate by its \(u\)th power
with \(\gcd(u,q)=1\), an orthonormal logical eigenbasis has the form
\begin{equation}
  \ket{0_L}=\sum_{x\in S_0(a;q)}c_x\ket{x},
  \qquad
  \ket{1_L}=\sum_{y\in S_{q/m}(a;q)}d_y\ket{y}.
  \label{eq:prelim-two-fiber}
\end{equation}
Here the normalized data are again denoted by \((q,m,a)\).  The local
\(X\) conjugation sends a supported string of the first logical vector to
\(0^n\), placing that vector in \(S_0\) after the common scalar is removed.
Choosing a power relatively prime to \(q\) normalizes the relative residue
to \(q/m\) while preserving both orders.  These steps are justified in
Corollary~\ref{cor:rank-two-sslp}.  Zero coefficients are allowed throughout.

The squared amplitudes \(p_0(x):=|c_x|^2\) and \(p_1(y):=|d_y|^2\)
are probability distributions on the two solution sets:
\begin{equation}
  p_0(x),p_1(y)\ge0,
  \qquad
  \sum_{x\in S_0}p_0(x)=\sum_{y\in S_{q/m}}p_1(y)=1.
  \label{eq:prelim-LP-normalization}
\end{equation}
For a diagonal Pauli operator \(Z_J:=\prod_{j\in J}Z_j\),
\begin{equation}
  Z_J\ket{x}=\prod_{j\in J}(1-2x_j)\ket{x}.
  \label{eq:prelim-diagonal-Pauli}
\end{equation}
Because the two solution sets are disjoint, the off-diagonal logical matrix
elements of every \(Z_J\) vanish automatically.  Equality of the diagonal
matrix elements in Eq.~\eqref{eq:logical-basis-KL} for all \(|J|<d_0\)
is equivalent to matching the averages of all products of fewer than
\(d_0\) distinct bits.  These averages are the \emph{binary moments};
their degree is the number of bits in the product.  The equivalence follows
because the substitutions \(z_j=1-2x_j\) and \(x_j=(1-z_j)/2\)
express either family of functions in terms of the other.

The first moments are the averages of individual bits.  The second moments
are the averages of products $x_ix_j$; since $x_i^2=x_i$, only distinct pairs
add new information.  In particular, at distance at least three, the
conditions for \(Z_i\) and \(Z_iZ_j\) give the moment equations
\begin{equation}
  \sum_{x\in S_0}p_0(x)\phi_2(x)
  =
  \sum_{y\in S_{q/m}}p_1(y)\phi_2(y),
  \qquad
  \phi_2(x):=(1,x_1,\ldots,x_n,(x_ix_j)_{i<j}).
  \label{eq:prelim-moment-LP}
\end{equation}
Equations~\eqref{eq:prelim-LP-normalization}
and~\eqref{eq:prelim-moment-LP} ask whether nonnegative weights satisfying
these linear equations exist.  This test depends only on the two full
subset-sum solution sets.  Equivalently, the sets of
all weighted averages of \(\phi_2(S_0)\) and \(\phi_2(S_{q/m})\) must
intersect, where the weights are nonnegative and sum to one.  These sets
of weighted averages are called convex hulls.  This is the moment
geometry retained by an SSLP moment candidate.  At distance two only the
first moments enter the LP; higher target distances add higher-degree moments.

For general rank \(K\), distinct logical eigenvalues correspond to distinct
subset-sum solution sets.
Normalized diagonals of the corresponding eigenspace projectors provide
probability distributions whose moments must agree across these solution sets.
When an eigenspace has dimension greater than one, these averaged moment
conditions do not capture all logical matrix constraints within that
eigenspace; those remain part of the full KL completion described in
Section~\ref{sec:sslp-normal-form}.

For rank two, the complete workflow is therefore
\begin{equation}
  (q,m,a)
  \xrightarrow{\mathrm{SS}}
  (S_0,S_{q/m})
  \xrightarrow{\mathrm{LP}}
  \text{feasible moment data}
  \xrightarrow{\mathrm{KL}}
  \text{exact code realizations}.
  \label{eq:sslp-three-layers}
\end{equation}
The SS layer is discrete; the LP layer tests only diagonal moment equations;
and the KL layer restores the complex amplitudes and imposes every remaining
Pauli equation, including the non-diagonal \(X\)- and \(Y\)-type
constraints.  For each such error, both requirements of
Eq.~\eqref{eq:logical-basis-KL} must hold.  Passing the LP test is therefore
necessary but not sufficient for a
physical code.  In
Section~\ref{six:sec:scope}, we group outputs of the first two layers that
are related by bit flips, coordinate permutations, and interchange of
the two logical labels.  Each group is an \emph{SSLP candidate class}.
It is \emph{realizable} only when some amplitudes and phases satisfy all
remaining Knill--Laflamme equations.

This decomposition makes SSLP a finite search once the physical order
\(q\) is bounded.  What remains unclear is whether starting from a
finite-order physical gate loses any diagonal logical symmetries, and
whether the necessary values of \(q\) can be bounded before the code is
known.  Section~\ref{sec:sslp-classification} answers both questions.

\section{SSLP normal form and physical-order bounds}
\label{sec:sslp-classification}

Section~\ref{sec:sslp-preliminaries} left two questions unresolved.  Does
every nontrivial diagonal logical action admit a finite-order physical
representative, and can the required physical order be bounded before the
code is known?  We answer both questions here, thereby turning SSLP from a
construction method into an exhaustive finite classification framework.

Our argument proceeds in three steps.  We first describe all
code-preserving diagonal phases by an integer matrix.  The homogeneous real
directions of this system act only by logical scalars, while every remaining
projective class has a representative with a common rational denominator.
The logical eigenspaces of such a representative then give the modular
subset-sum data of the SSLP normal form.  Finally,
Section~\ref{sec:q-bounds} bounds the necessary physical order \(q\)
uniformly in terms of the code parameters, making the outer enumeration
finite.  Section~\ref{six:sec:scope} applies the resulting framework to
six-qubit codes.  Throughout, ``diagonal'' refers to the fixed local product
basis chosen at the outset.

\subsection{Finite diagonal logical groups and physical representatives}
\label{sec:finite-projective-group}

We now establish finiteness of the projective diagonal logical group
defined in Eq.~\eqref{eq:Gdiag} and the existence of finite-order physical
representatives.

The error-detection input is particularly modest: we use only the
single-site $Z_j$ conditions.  Set
\begin{equation}
 N_j:=\frac{I-Z_j}{2}.
 \label{eq:number-operator}
\end{equation}
On a computational-basis vector $\ket{x}$, $N_j$ has eigenvalue $x_j$.
Detection of $Z_j$ is equivalent to
\begin{equation}
 PN_jP=\mu_jP,
 \qquad
 \mu_j=\frac{1-\lambda_j}{2}.
 \label{eq:Nj-detection}
\end{equation}

The phase parameters in Eq.~\eqref{eq:Dtheta} form the $n$-torus, which
identifies phase vectors differing by integers in each coordinate:
\[
 \begin{aligned}
  \T^n:=\R^n/\Z^n
  &\cong (U(1))^n \\
  &=\left\{(\e^{2\pi\ii\theta_1},\ldots,\e^{2\pi\ii\theta_n}):
    \theta_j\in\R\right\}.
 \end{aligned}
\]
The code-preserving phase vectors form the subgroup
\begin{equation}
 \mathcal T_P
 :=\{\theta\in\T^n:D(\theta)PD(\theta)^\dagger=P\}.
 \label{eq:physical-diagonal-stabilizer}
\end{equation}
The preservation condition will become a finite system of integer
constraints.  The following elementary lemma is the step that replaces an
arbitrary real solution by a rational representative without changing its
homogeneous component.

\begin{lemma}[Uniform rational section]
\label{lem:uniform-rational-section}
Let $A\in\Z^{s\times n}$.  There is a positive integer $Q_A$ such that, for
every $t\in\R^n$ satisfying $At\in\Z^s$, there is an $a\in\Z^n$ with
\begin{equation}
 A\left(t-\frac{a}{Q_A}\right)=0.
 \label{eq:uniform-rational-section}
\end{equation}
\end{lemma}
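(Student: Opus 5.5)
The plan is to use the Smith normal form of $A$. Write $A=UDV$ with $U\in GL_s(\Z)$, $V\in GL_n(\Z)$ unimodular, and $D\in\Z^{s\times n}$ diagonal with nonzero entries $d_1\mid d_2\mid\cdots\mid d_r$, where $r=\rank A$. We will take $Q_A:=d_r$, or $Q_A:=1$ if $A=0$. Equivalently, $Q_A$ is the least common multiple of the nonzero invariant factors. The claim is that this $Q_A$ depends only on $A$ and works for every admissible $t$.

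Given $t\in\R^n$ with $At\in\Z^s$, set $u:=Vt\in\R^n$. Since $U$ is unimodular, $At\in\Z^s$ is equivalent to $Du=U^{-1}At\in\Z^s$, that is, $d_iu_i\in\Z$ for $i\le r$; the coordinates $u_{r+1},\dots,u_n$ are unconstrained. For $i\le r$ write $u_i=k_i/d_i$ with $k_i\in\Z$. Since $d_i\mid Q_A$, this equals $(k_iQ_A/d_i)/Q_A$. Define $b\in\Z^n$ by $b_i:=k_iQ_A/d_i$ for $i\le r$ and $b_i:=0$ for $i>r$. Then $D(u-b/Q_A)=0$, because the first $r$ coordinates of $u-b/Q_A$ vanish and $D$ kills the remaining ones.

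Finally set $a:=V^{-1}b\in\Z^n$, which is integral because $V^{-1}$ is integral. Then
\[
A\left(t-\frac{a}{Q_A}\right)=UDV\left(t-\frac{V^{-1}b}{Q_A}\right)=UD\left(u-\frac{b}{Q_A}\right)=0,
\]
which is Eq.~\eqref{eq:uniform-rational-section}.

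No step is a serious obstacle. The one point to check is that $Q_A$ does not depend on $t$. This holds because the denominators $d_i$ come from $A$ alone, while the numerators $k_i$ vary with $t$.

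An alternative route avoids Smith normal form. The set $L:=\{t:At\in\Z^s\}$ is a closed subgroup containing $\ker A$, and $L/\ker A$ is a discrete, finitely generated lattice. Taking a finite set of generators, each rational because of Cramer's rule on a maximal nonsingular minor, gives a common denominator. The Smith form version packages this cleanly, so I would present that one.
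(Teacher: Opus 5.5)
Your Smith-normal-form argument is correct, but it takes a different route from the paper's proof. The paper never diagonalizes $A$ over $\Z$. It sets $B=A^{\mathsf T}A$, chooses a rational generalized inverse $G$ with $BGB=B$, and takes $Q_A$ to be any common denominator of $G$. It then puts $a=Q_AGA^{\mathsf T}(At)$, obtains $B(t-a/Q_A)=0$, and upgrades this to $A(t-a/Q_A)=0$ using $\ker_{\R}(A^{\mathsf T}A)=\ker_{\R}A$, which follows from $v^{\mathsf T}A^{\mathsf T}Av=\norm{Av}^2$. That argument needs only rational linear algebra plus this positivity identity, but its constant is non-explicit and generally not sharp. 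Your version relies on the existence of the Smith form over $\Z$. In exchange it is purely algebraic and pins the constant down exactly. In the coordinates $u=Vt$, the admissible set becomes $\{u: d_iu_i\in\Z,\ i\le r\}$, $\ker A$ becomes $\{u:u_1=\cdots=u_r=0\}$, and $V\Z^n=\Z^n$. Hence a modulus $Q$ works precisely when every $d_i$ divides $Q$, so your $Q_A=d_r$ is the smallest valid choice. Theorem~\ref{thm:finite-projective-group} only needs some uniform $Q_0$, so the paper's cruder constant suffices there. The sharp denominator control your proof gives is what the paper obtains later, for the quantitative order bound, through the same Smith-form mechanism in Lemma~\ref{lem:denominator-clearing} applied to binary minors.
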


\begin{proof}
Regard $A$ as a rational matrix, with rational matrices acting on real
vectors by scalar extension, and set $B:=A^{\mathsf T}A\in\Q^{n\times n}$.
Choose a rational generalized inverse $G$ of $B$, so that
\begin{equation}
 BGB=B.
 \label{eq:rational-generalized-inverse}
\end{equation}
Such a $G$ exists: choose a right inverse of $B$ on $\operatorname{im}B$
and extend it arbitrarily to a complementary subspace.  Choose $Q_A>0$ so
that $M:=Q_AG$ has integer entries.

Now let $b:=At\in\Z^s$, set $d:=A^{\mathsf T}b\in\Z^n$, and define
$a:=Md\in\Z^n$.  Since $d=A^{\mathsf T}At=Bt$, Eq.~\eqref{eq:rational-generalized-inverse}
gives
\[
 B\frac{a}{Q_A}=BGd=BGBt=Bt.
\]
Hence $B(t-a/Q_A)=0$.  Over $\R$ one has
$\ker(A^{\mathsf T}A)=\ker A$, because
\[
 v^{\mathsf T}A^{\mathsf T}Av=\norm{Av}^2.
\]
Therefore $A(t-a/Q_A)=0$.
\end{proof}

We can now answer the first question posed at the beginning of the section.
The theorem concerns all logical operations induced by code-preserving
diagonal product gates, with overall phases ignored.  Its proof combines
the single-site detection condition in Eq.~\eqref{eq:Nj-detection} with the
rational replacement supplied by
Lemma~\ref{lem:uniform-rational-section}.  A group is \emph{abelian} when all
its elements commute.

\begin{theorem}[Finite diagonal logical group and physical representatives]
\label{thm:finite-projective-group}
Fix a local product basis, and let $Z_j$ denote Pauli $Z$ on the $j$th qubit
in that basis.  Suppose a rank-$K$ code projector $P$ satisfies
\begin{equation}
 PZ_jP=\lambda_jP,
 \qquad j=1,\ldots,n.
 \label{eq:all-Z-detection}
\end{equation}
Then $\mathcal G_{\mathrm{diag}}(P)$ is a finite abelian group.
Moreover, every nonidentity element of projective logical order $m$ has a
physical representative in reduced form $D(a;q)$ of exact order $q$, with
$m\mid q$.
\end{theorem}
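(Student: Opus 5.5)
The plan is to turn the preservation condition into an integer linear system, use Lemma~\ref{lem:uniform-rational-section} to split every code-preserving phase vector into a rational part and a homogeneous part, and then use the single-site detection conditions~\eqref{eq:all-Z-detection} to show that the homogeneous part acts only as a logical scalar.

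\emph{Step 1: preservation as an integer system.} Write $P=\sum_{x,y}P_{xy}\ket{x}\bra{y}$ in the computational basis. Then
\[
D(\theta)PD(\theta)^\dagger=\sum_{x,y}\e^{2\pi\ii\,\theta\cdot(x-y)}P_{xy}\ket{x}\bra{y}.
\]
Hence $D(\theta)$ preserves the code if and only if $\theta\cdot(x-y)\in\Z$ for every pair with $P_{xy}\neq0$. Let $A\in\Z^{s\times n}$ have as rows the finitely many vectors $x-y\in\{-1,0,1\}^n$ arising from such pairs. Then
\[
\mathcal T_P=\{\theta:\ A\theta\in\Z^s\}\bmod\Z^n .
\]

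\emph{Step 2: homogeneous directions act as logical scalars.} Suppose $v\in\R^n$ satisfies $Av=0$. Then $D(v)$ commutes with $P$. Since $D(v)=\exp(2\pi\ii H)$ with $H=\sum_j v_jN_j$, the same entrywise computation shows $H$ commutes with $P$: its commutator has entries $v\cdot(x-y)P_{xy}$, which all vanish. Consequently $HP=PHP=\bigl(\sum_j v_j\mu_j\bigr)P$ by Eq.~\eqref{eq:Nj-detection}, and therefore
\[
D(v)P=\e^{2\pi\ii\sum_j v_j\mu_j}\,P .
\]
So $\overline{D(v)}$ is the identity element of $\mathcal G_{\mathrm{diag}}(P)$. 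I expect this to be the key step: it is the only place where error detection enters. The point is to pass from "the unitary commutes with $P$" to "the Hermitian generator commutes with $P$", which holds precisely because the condition $Av=0$ is linear rather than merely integral.

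\emph{Step 3: rational representatives and finiteness.} Take any code-preserving $\theta$. Lemma~\ref{lem:uniform-rational-section}, applied to $A$, gives $a\in\Z^n$ with $A(\theta-a/Q_A)=0$. Then $A(a/Q_A)=A\theta\in\Z^s$, so $D(a;Q_A)$ is itself code preserving. Moreover $D(\theta)=D(\theta-a/Q_A)\,D(a;Q_A)$, and by Step~2 the first factor is a logical scalar. Hence every element of $\mathcal G_{\mathrm{diag}}(P)$ is $\overline{D(a;Q_A)}$ for some $a\in\Z_{Q_A}^n$. The group is therefore a homomorphic image of a subgroup of the finite group $\Z_{Q_A}^n$, so it is finite. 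It is abelian because diagonal gates commute.

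\emph{Step 4: reduced form.} For a given element, set $g=\gcd(Q_A,a_1,\ldots,a_n)$, $q=Q_A/g$ and $a'=a/g$. Then $D(a;Q_A)=D(a';q)$ is in reduced form and has exact order $q$. Since $D(a';q)^q=I$, the projective order $m$ of the element divides $q$. A nonidentity element forces $q\ge2$, so $D(a';q)$ is a genuine reduced representative of the kind in Eq.~\eqref{eq:Daq}.
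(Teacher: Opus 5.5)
Your proposal is correct and follows the paper's proof essentially step for step: the integer constraint matrix built from the nonzero entries of $P$, the argument that kernel directions act on the code as the scalar $\e^{2\pi\ii\,(v\cdot\mu)}$, the uniform rational section from Lemma~\ref{lem:uniform-rational-section}, and finiteness via $(\Z_{Q_A})^n$. The only difference is cosmetic and lies in the reduced-form step. You divide out $\gcd(Q_A,a_1,\ldots,a_n)$, whereas the paper takes $q$ to be the least positive integer with $qr=0$ in $\T^n$; both choices give the same $q$ and the same gate.
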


\begin{proof}
For $x\in\{0,1\}^n$ and $t\in\R^n$, write
\[
 s_t(x):=x\cdot t=\sum_{j=1}^n x_jt_j.
\]
The gate $D(t)$ multiplies $\ket{x}$ by $\e^{2\pi\ii s_t(x)}$.  Writing
$P_{xy}:=\bra{x}P\ket{y}$, we therefore have
\begin{equation}
 \bigl(D(t)PD(t)^\dagger\bigr)_{xy}
 =\e^{2\pi\ii(s_t(x)-s_t(y))}P_{xy}.
 \label{eq:diagonal-conjugation-entry}
\end{equation}
Let
\[
 E_P:=\{(x,y)\in\{0,1\}^n\times\{0,1\}^n:P_{xy}\ne0\},
\]
choose any ordering of this finite set, and define
$A_P\in\Z^{|E_P|\times n}$ by the row indexed by $(x,y)$:
\[
 (A_P)_{(x,y),j}:=x_j-y_j.
\]
Equation~\eqref{eq:diagonal-conjugation-entry} gives the finite constraint
description
\begin{equation}
 D(t)PD(t)^\dagger=P
 \quad\Longleftrightarrow\quad
 A_Pt\in\Z^{|E_P|}.
 \label{eq:preservation-integer-matrix}
\end{equation}

We first identify the homogeneous real directions.  Let
$h\in\ker_{\R}A_P$ and set
\begin{equation}
 H_h:=\sum_{j=1}^n h_jN_j.
 \label{eq:diagonal-Hh}
\end{equation}
Since $H_h$ is diagonal with eigenvalue $h\cdot x$ on $\ket{x}$,
\[
 (H_hP-PH_h)_{xy}=((x-y)\cdot h)P_{xy}=0.
\]
Thus $H_hP=PH_h$.  Equation~\eqref{eq:Nj-detection} gives
\[
 PH_hP=\sum_{j=1}^n h_jPN_jP=(h\cdot\mu)P.
\]
Using $P^2=P$ and the commutation relation,
\[
 H_hP=H_hP^2=(H_hP)P=(PH_h)P=PH_hP=(h\cdot\mu)P.
\]
Taking powers and exponentiating yields
\begin{equation}
 D(h)P=\e^{2\pi\ii(h\cdot\mu)}P.
 \label{eq:kernel-direction-scalar}
\end{equation}
Hence every direction in $\ker_{\R}A_P$ acts on the code by a common scalar.

Apply Lemma~\ref{lem:uniform-rational-section} to $A_P$, and let $Q_0$ be
the resulting positive integer.  Take any $\theta\in\mathcal T_P$ and choose
a real lift $t\in\R^n$.  By Eq.~\eqref{eq:preservation-integer-matrix},
$A_Pt$ is integral, so there is a $c\in\Z^n$ such that
\[
 A_P\left(t-\frac{c}{Q_0}\right)=0.
\]
Set
\[
 r:=\frac{c}{Q_0},
 \qquad
 h:=t-r.
\]
Then $A_Ph=0$, while $A_Pr=A_Pt$ is integral, so $D(r)$ also preserves the
code.  By Eq.~\eqref{eq:kernel-direction-scalar},
\[
 D(t)P=D(r)D(h)P
 =\e^{2\pi\ii(h\cdot\mu)}D(r)P.
\]
Thus $D(t)$ and $D(r)$ induce the same projective logical operation.

The phase vector $r\bmod\Z^n$ depends only on $c\bmod Q_0$.  Therefore every
element of $\mathcal G_{\mathrm{diag}}(P)$ has a representative in the finite
set
\[
 \left(\frac1{Q_0}\Z/\Z\right)^n\cong(\Z_{Q_0})^n.
\]
It follows that $\mathcal G_{\mathrm{diag}}(P)$ is finite.  It is abelian
because all physical diagonal product gates commute.

Now let $\overline U$ be a nonidentity element of
$\mathcal G_{\mathrm{diag}}(P)$ with projective logical order $m$, and choose
one of the code-preserving representatives $r=c/Q_0$ constructed above.
Let $q$ be the smallest positive integer such that
\begin{equation}
 qr=0\qquad\text{in }\T^n.
 \label{eq:exact-phase-order}
\end{equation}
Such a $q$ exists because $Q_0r=0$.  Since $\overline U$ is nonidentity,
$r\ne0$, and hence $q\ge2$.

Since $qr_j=0\pmod{\Z}$ for every coordinate $j$, we may write
\[
 r_j=\frac{a_j}{q}\pmod\Z,
 \qquad 0\le a_j<q.
\]
Thus the corresponding physical gate is $D(a;q)$.  The phase parametrization
$D:\T^n\to U(2^n)$ is injective: applying $D(\theta)$ to the basis vector
with a single $1$ in position $j$ recovers the $j$th coordinate phase.
Consequently, Eq.~\eqref{eq:exact-phase-order} implies that $D(a;q)$ has
exact physical order $q$.

It remains to check that the representative is reduced.  If
$g:=\gcd(q,a_1,\ldots,a_n)>1$, then
\[
 \frac{q}{g}r
 =\left(\frac{a_1}{g},\ldots,\frac{a_n}{g}\right)
 =0\qquad\text{in }\T^n,
\]
contradicting the minimality of $q$.  Therefore
$\gcd(q,a_1,\ldots,a_n)=1$.  Finally, $D(a;q)^q=I$ implies
$\overline U^{\,q}=1$.  Since $\overline U$ has order $m$, we conclude that
$m\mid q$.
\end{proof}

\noindent
The theorem answers the existence question, but its mechanism is worth
making concrete.  The kernel step uses the same infinitesimal principle as
the Eastin--Knill argument reviewed in
Section~\ref{sec:transversal-no-go}: a code-preserving generator built from
detectable single-site terms acts as a scalar on the code.  Here we assume
only detection of $Z_j$, and the conclusion concerns gates diagonal in that
same fixed basis.  The following example serves two purposes.  It shows
explicitly how subtracting a homogeneous direction in $\ker_{\R}A_P$
replaces an irrational phase vector by a rational one, and it shows that the
minimum physical order $q_{\min}$ realizing a logical action can strictly
exceed its projective order $m$.

\begin{example}[Rational replacement and a physical-order gap]
\label{ex:seven-qubit-rationalization}
Consider the seven-qubit code spanned by
\begin{align}
 \ket{0_L}
 &=\frac{\ket{0000}+\ket{1111}}{\sqrt2}
   \otimes
   \frac{\ket{000}+\sqrt2\,\ket{111}}{\sqrt3},
 \label{eq:example-zero-L}\\
 \ket{1_L}
 &=\frac1{\sqrt6}\Bigl[
   (\ket{1100}+\ket{0011})\ket{110}
   +(\ket{1010}+\ket{1001})\ket{011}\notag\\
 &\hspace{28mm}
   +(\ket{0110}+\ket{0101})\ket{101}
   \Bigr].
 \label{eq:example-one-L}
\end{align}
Let \(T_0,T_1\) denote their computational-basis supports.  The squared
amplitudes define probability distributions on these disjoint supports
with the common first moment
\begin{equation}
 \mu=\left(\frac12,\frac12,\frac12,\frac12,
           \frac23,\frac23,\frac23\right).
 \label{eq:example-common-mean}
\end{equation}
Hence \(PZ_jP=(1-2\mu_j)P\) for every \(j\), so the hypotheses of
Theorem~\ref{thm:finite-projective-group} hold.  No two distinct strings in
\(T_0\cup T_1\) have Hamming distance one, so \(PX_jP=PY_jP=0\) as well.
On the other hand,
\[
 \bra{0_L}Z_1Z_2\ket{0_L}=1,
 \qquad
 \bra{1_L}Z_1Z_2\ket{1_L}=-\frac13.
\]
Thus the code has distance exactly two.

Because $T_0$ and $T_1$ are disjoint and every displayed coefficient is
nonzero, a diagonal gate preserves the code exactly when its phase is
constant on each support.  Comparing the strings within $T_1$ gives
\[
 \theta\equiv
 (\alpha+u,\alpha-u,\alpha,\alpha,
  \beta+u,\beta-u,\beta)\pmod1.
\]
Constancy on $T_0$, which contains the all-zero string, further requires
$4\alpha\equiv0$ and $3\beta\equiv0\pmod1$.  Set
\[
 v:=(1,-1,0,0,1,-1,0),
 \qquad
 \theta_{\alpha,\beta}
 :=(\alpha,\alpha,\alpha,\alpha,\beta,\beta,\beta).
\]
Thus
\begin{equation}
 \mathcal T_P
 =\left\{
 \theta_{\alpha,\beta}+uv\bmod\Z^7:
 \alpha\in\tfrac14\Z/\Z,\ 
 \beta\in\tfrac13\Z/\Z,\ 
 u\in\R
 \right\}.
 \label{eq:example-physical-group}
\end{equation}

Every string $x\in T_0\cup T_1$ satisfies
\[
 v\cdot x=x_1-x_2+x_5-x_6=0.
\]
Thus every row $x-y$ of the matrix $A_P$ in the proof of
Theorem~\ref{thm:finite-projective-group} annihilates $v$.  Conversely,
solving $A_Ph=0$ uses the same comparisons over $\R$, with the congruences
replaced by equalities.  These force $\alpha=\beta=0$ and leave only the
parameter $u$.  Therefore
\begin{equation}
 \ker_{\R}A_P=\R v.
 \label{eq:example-kernel}
\end{equation}
Moreover, $v\cdot\mu=0$.  Equation~\eqref{eq:kernel-direction-scalar}
therefore gives
\begin{equation}
 D(uv)P=P,
 \qquad u\in\R.
 \label{eq:example-kernel-identity}
\end{equation}
This is precisely the homogeneous direction removed in the rational-section
step of the theorem.  Subtracting $uv$ from any phase vector in
Eq.~\eqref{eq:example-physical-group} therefore gives the rational vector
$\theta_{\alpha,\beta}$ without changing its logical action.

The logical action of $\theta_{\alpha,\beta}$ is
\begin{equation}
 \diag\bigl(1,\e^{2\pi\ii(2\alpha+2\beta)}\bigr).
 \label{eq:example-logical-map}
\end{equation}
Writing $\alpha=a/4$ and $\beta=b/3$, its relative phase is encoded by the
homomorphism
\[
 \chi:\Z_4\times\Z_3\longrightarrow\Z_6,
 \qquad
 \chi(a,b)=3a+4b\pmod6.
\]
It is surjective because $\chi(1,1)=1$ in $\Z_6$, and solving
$\chi(a,b)=0$ gives the kernel $\{(0,0),(2,0)\}$.  Hence
\begin{equation}
 \mathcal G_{\mathrm{diag}}(P)
 \cong(\Z_4\times\Z_3)/\ker\chi
 \cong\Z_6.
 \label{eq:example-logical-group}
\end{equation}

Now define the rational phase vector
\begin{equation}
 \theta_0
 :=\frac1{12}(3,3,3,3,4,4,4),
 \label{eq:example-rational-theta-prime}
\end{equation}
and, for $t\in\R$, the family
\begin{equation}
 \theta(t):=\theta_0+tv
 =\left(
 \frac14+t,\frac14-t,\frac14,\frac14,
 \frac13+t,\frac13-t,\frac13
 \right).
 \label{eq:example-one-parameter-family}
\end{equation}
By Eq.~\eqref{eq:example-kernel-identity}, all these phase vectors induce
the same logical action,
\begin{equation}
 D(\theta(t))\big|_{\mathcal Q}
 =\diag(1,\e^{2\pi\ii/6}),
 \label{eq:example-logical-action}
\end{equation}
which has projective order $m=6$.  For irrational $t$, the displayed
physical representative has infinite order.  Removing the kernel direction
$tv$, exactly as in the proof of the theorem, gives $\theta_0$.  Since
$\gcd(12,3,3,3,3,4,4,4)=1$, $D(\theta_0)$ is in reduced form and has exact
physical order $q=12$.

It remains to show that twelve is minimal.  Let a finite-order phase vector
in Eq.~\eqref{eq:example-physical-group} induce the same projective logical
action.  Write $\alpha=a/4$ and $\beta=b/3$.  By
Eq.~\eqref{eq:example-logical-map}, the relative phase condition is
\begin{equation}
 3a+4b\equiv1\pmod6.
 \label{eq:example-target-congruence}
\end{equation}
Reducing this congruence modulo two shows that $a$ is odd, and reducing it
modulo three shows that $b\equiv1\pmod3$.  Hence $\alpha$ has exact order
four in $\T$, while $\beta$ has exact order three.  If the physical
representative has finite order $q$, then $q\theta=0$ in $\T^7$.  Its third
and seventh coordinates are $\alpha$ and $\beta$, respectively, so
$q\alpha=q\beta=0$.  Therefore $4\mid q$ and $3\mid q$, and hence
$12\mid q$.  Since $\theta_0$ attains this order,
\begin{equation}
 q_{\min}(\overline U)=12>6=m.
 \label{eq:example-minimum-order-gap}
\end{equation}
A search restricted to \(q=m\) would therefore miss this order-six
logical action in the chosen product basis.
\end{example}

We have now answered the first question posed at the beginning of the
section.  For each fixed code, there are finitely many projective diagonal
logical elements, and each has a finite-order physical representative.  One
problem remains: finiteness for each code does not yet give a uniform bound
on $q_{\min}(\overline U)$ in terms of $n$, $K$, or $d$.  We establish that
additional ingredient in Section~\ref{sec:q-bounds}.

An arbitrary displayed modulus is not an invariant: replacing $(q,a)$ by
$(rq,ra)$ leaves the local phases and subset-sum solution sets unchanged,
but introduces a common factor.  The later quantitative result bounds the
minimum physical order, not every presentation or every physical
representative of the same projective logical action.

\subsection{SSLP as a classification normal form}
\label{sec:sslp-normal-form}

Theorem~\ref{thm:finite-projective-group} gives us a finite-order physical
representative.  We next ask how much of the code that representative
determines.  The answer is exactly the SSLP support structure introduced in
Section~\ref{sec:sslp-preliminaries}: the logical eigenspaces must lie in
labeled subset-sum fibers.  We formulate this as a normal form for arbitrary
rank $K$.  For general $K$ the statement concerns a code together with a
chosen nontrivial projective diagonal logical element; the full finite
abelian group need not be cyclic.  The rank-two specialization is simpler
because one element can be chosen to generate the entire group.

\begin{theorem}[SSLP normal form for general code dimension]
\label{thm:general-sslp-normal-form}
Let $\mathcal Q$ be an exact $((n,K,d))$ code with $d\ge2$, and let
$\overline U$ be a nontrivial projective logical action induced by a product
gate diagonal in the chosen basis.  Let $m>1$ be the order of
$\overline U$.  Then, up to local $X$ conjugation, permutation of the
physical qubits, and a logical scalar, there are integers
\begin{equation}
 q\ge2,
 \qquad m\mid q,
 \qquad
 a\in\Z_q^n,
 \qquad
 0\le a_1\le\cdots\le a_n<q,
 \qquad
 \gcd(q,a_1,\ldots,a_n)=1,
 \label{eq:general-sslp-outer-data}
\end{equation}
distinct logical phase labels $\gamma_1,\ldots,\gamma_\ell\in\Z_m$, where
$\ell\ge2$, and positive multiplicities
$k_1,\ldots,k_\ell$ such that
\begin{equation}
 \gamma_1=0,
 \qquad
 \gcd(m,\gamma_2,\ldots,\gamma_\ell)=1,
 \qquad
 \sum_{\alpha=1}^{\ell}k_\alpha=K,
 \label{eq:general-sslp-colors}
\end{equation}
and the logical eigenspaces obey
\begin{equation}
 \dim E_\alpha=k_\alpha,
 \qquad
 E_\alpha\subseteq
 \Span\{\ket{x}:x\in S_{(q/m)\gamma_\alpha}(a;q)\}.
 \label{eq:general-colored-fibers}
\end{equation}

Conversely, suppose data satisfying
Eqs.~\eqref{eq:general-sslp-outer-data}--
\eqref{eq:general-sslp-colors} are given, and choose subspaces satisfying
Eq.~\eqref{eq:general-colored-fibers}.  If their direct sum satisfies the
Knill--Laflamme equations for all Pauli operators of weight less than a
target distance $d_0$, then their direct sum is a rank-$K$ code of distance
at least $d_0$, and $D(a;q)$ induces a projective logical action of exact
order $m$.
\end{theorem}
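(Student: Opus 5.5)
For the forward direction I will start from Theorem~\ref{thm:finite-projective-group}, whose hypothesis holds because $d\ge2$ means every single-site $Z_j$ is detected. It supplies a reduced representative $D(a;q)$ of $\overline U$ with exact order $q$ and $m\mid q$. Since $D(a;q)$ preserves $P$, it commutes with $P$, so $\mathcal Q$ splits into eigenspaces of $D(a;q)|_{\mathcal Q}$. On $\ket{x}$ the gate acts by $\omega_q^{a\cdot x}$, so each eigenspace with eigenvalue $\omega_q^{b}$ lies in $\Span\{\ket x:x\in S_b(a;q)\}$.

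The next step is to constrain these eigenvalues. Because $\overline U^{\,m}$ is trivial, $D(a;q)^m$ acts on $\mathcal Q$ as a scalar $\omega_q^{mb_0}$. Hence every occurring residue $b$ satisfies $m(b-b_1)\equiv0\pmod q$, where $b_1$ is the residue of the first eigenspace. This forces $b-b_1=(q/m)\gamma$ for some $\gamma\in\Z_m$. Distinct eigenvalues give distinct labels $\gamma_\alpha$, and their multiplicities $k_\alpha$ sum to $K$. There are $\ell\ge2$ labels because $\overline U$ is nontrivial. Moreover, $\gcd(m,\gamma_2,\ldots,\gamma_\ell)=1$: if a common divisor $g>1$ of $m$ divided every label, then $\overline U^{\,m/g}$ would already be scalar, contradicting the order $m$.

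To normalize, I will pick a supported string $x_0$ in $E_1$ and conjugate by $X^{x_0}$. This maps $S_b$ to a fiber of the modified vector $a'$ with $a'_j=-a_j$ on the flipped coordinates, shifted by the constant $-a\cdot x_0$. Up to a scalar that constant is absorbed, and the eigenspace containing $0^n$ then lies in $S_0$, so $b_1=0$ and $\gamma_1=0$. Negating coordinates modulo $q$ keeps $\gcd(q,a)=1$ and preserves the order. A qubit permutation then sorts $a$ into nondecreasing order. The main bookkeeping obstacle is checking that the $X$-conjugation keeps the gate in reduced form with the same exact order $q$ and correctly relabels the fibers. I expect this to be routine once the phase-shift identity $D(a;q)X^{x_0}=\omega_q^{a\cdot x_0}X^{x_0}D(a';q)$ is written out.

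For the converse, the direct sum $\mathcal Q=\bigoplus_\alpha E_\alpha$ has dimension $K$, since the fibers are disjoint and so the $E_\alpha$ are mutually orthogonal. The Knill--Laflamme equations for all Pauli operators of weight less than $d_0$ give distance at least $d_0$ by Eq.~\eqref{eq:KL}. Because $D(a;q)$ acts on $E_\alpha$ as $\omega_q^{(q/m)\gamma_\alpha}=\omega_m^{\gamma_\alpha}$, it preserves $\mathcal Q$. Its projective logical action is $\diag(\omega_m^{\gamma_\alpha})$, whose $k$th power is scalar exactly when $m\mid k\gamma_\alpha$ for all $\alpha$ (using $\gamma_1=0$). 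By the gcd condition this happens exactly when $m\mid k$, so the projective order is exactly $m$.
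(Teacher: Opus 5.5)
Your proposal is correct and follows essentially the same route as the paper: Theorem~\ref{thm:finite-projective-group} supplies a reduced representative, the code splits into residue sectors $\mathcal Q\cap\Span\{\ket{x}:x\in S_b(a;q)\}$, and $X^{x_0}$ normalization with the sign-change identity and a sorting permutation gives the canonical form, while the converse uses the same orthogonality and eigenvalue-order argument. The only cosmetic difference is that you extract the labels $\gamma_\alpha$ relative to $b_1$ before normalizing and prove the gcd condition by contradiction, whereas the paper normalizes first and reads the gcd condition off the additive order of $(\gamma_2,\ldots,\gamma_\ell)$.
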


\begin{proof}
Since $d\ge2$, the code detects every single-qubit $Z_j$.
Theorem~\ref{thm:finite-projective-group} therefore gives a representative
$D(a;q)$ of $\overline U$ in reduced form, of exact physical order $q$,
with $m\mid q$.

\emph{Residue-sector decomposition.}
For $b\in\Z_q$, set
\[
 \mathcal H_b
 :=
 \Span\{\ket{x}:A_a(x)=b\},
 \qquad
 E_b:=\mathcal Q\cap\mathcal H_b.
\]
The spaces $\mathcal H_b$ are the eigenspaces of $D(a;q)$.  Since
$D(a;q)$ preserves $\mathcal Q$, its spectral projectors preserve
$\mathcal Q$, and hence
\[
 \mathcal Q=\bigoplus_{b\in\Z_q}E_b.
\]
Enumerate the residues for which $E_b\ne0$ as
$\widetilde b_1,\ldots,\widetilde b_\ell$, and put
$k_\alpha:=\dim E_{\widetilde b_\alpha}$.  Thus every $k_\alpha$ is
positive and
\[
 \sum_{\alpha=1}^{\ell}k_\alpha=K.
\]
The projective action is nontrivial, so $\ell\ge2$.

\emph{Canonical normalization.}
Choose a nonzero vector $\psi\in E_{\widetilde b_1}$, and choose
$x_0\in\{0,1\}^n$ with $\langle x_0\mid\psi\rangle\ne0$.  Then
$A_a(x_0)=\widetilde b_1$.  Write $t:=x_0$ and define
\[
 a^{(t)}_j:=(-1)^{t_j}a_j\pmod q.
\]
The one-qubit identity
\[
 X\diag(1,\omega_q^{a_j})X
 =
 \omega_q^{a_j}\diag(1,\omega_q^{-a_j})
\]
gives
\begin{equation}
 X^tD(a;q)X^t
 =
 \omega_q^{A_a(t)}D(a^{(t)};q),
 \qquad
 X^t:=\bigotimes_{j=1}^n X_j^{t_j}.
 \label{eq:general-local-X-gate}
\end{equation}
The corresponding subset sums satisfy
\begin{equation}
 A_{a^{(t)}}(x\mathbin{\oplus}t)
 =
 A_a(x)-A_a(t).
 \label{eq:general-local-X-residues}
\end{equation}
Consequently, after conjugating the code by $X^t$ and discarding the common
scalar in Eq.~\eqref{eq:general-local-X-gate}, the sector at
$\widetilde b_\alpha$ is carried to the sector at
\[
 b_\alpha:=\widetilde b_\alpha-\widetilde b_1,
\]
and $b_1=0$.  Sign changes of the coefficients preserve reducedness and
the exact physical order $q$; local $X$ conjugation preserves all Pauli
weights and hence the code distance.

With the common phase removed, the projective order is the least positive
integer $r$ such that
\[
 r b_\alpha=0\quad\text{in }\Z_q
 \qquad\text{for every }\alpha.
\]
Its exact value is $m$.  Since $m\mid q$, the elements of $\Z_q$ annihilated
by $m$ are precisely
\[
 \frac qm\gamma,
 \qquad \gamma\in\Z_m,
\]
with $\gamma$ unique.  Thus there is a unique
$\gamma_\alpha\in\Z_m$ such that
\[
 b_\alpha=\frac qm\gamma_\alpha.
\]
The $\gamma_\alpha$ are distinct because the $b_\alpha$ are distinct, and
$\gamma_1=0$.  The additive order of
$(\gamma_2,\ldots,\gamma_\ell)\in\Z_m^{\ell-1}$ is
\[
 \frac{m}{\gcd(m,\gamma_2,\ldots,\gamma_\ell)}.
\]
The embedding $\gamma\mapsto(q/m)\gamma$ preserves this order, so its value
is the projective logical order $m$.  This proves the gcd condition in
Eq.~\eqref{eq:general-sslp-colors}.

Finally, choose representatives of the entries of $a^{(t)}$ in
$\{0,\ldots,q-1\}$ and permute the physical qubits so that they are
nondecreasing.  Rename the resulting coefficient vector $a$ and the
transported sectors $E_\alpha$.  This permutation preserves reducedness,
the physical and projective orders, the dimensions $k_\alpha$, and the code
distance.  By construction,
\[
 E_\alpha
 \subseteq
 \Span\{\ket{x}:x\in
 S_{(q/m)\gamma_\alpha}(a;q)\},
\]
which proves all the forward assertions.  The inclusion need not be an
equality, so parametrizations over the full solution set must retain zero
coefficients.

\emph{Converse.}
Let $E_\alpha$ be subspaces satisfying
Eq.~\eqref{eq:general-colored-fibers}, and let $\Pi_\alpha$ be the orthogonal
projector onto $E_\alpha$.  The distinct labels $\gamma_\alpha$ give
distinct residues $(q/m)\gamma_\alpha$ in $\Z_q$.  Their ambient subset-sum
fibers are mutually orthogonal, so the $E_\alpha$ are mutually orthogonal.
Hence
\[
 P:=\sum_{\alpha=1}^{\ell}\Pi_\alpha
\]
is an orthogonal projector, with
\[
 \rank P
 =
 \sum_{\alpha=1}^{\ell}\rank\Pi_\alpha
 =
 \sum_{\alpha=1}^{\ell}k_\alpha
 =K.
\]
On $E_\alpha$, the gate $D(a;q)$ acts by the scalar
\[
 \omega_q^{(q/m)\gamma_\alpha}
 =
 \e^{2\pi\ii\gamma_\alpha/m}.
\]
It therefore preserves $\operatorname{im}P$.  Since $\gamma_1=0$, its
$r$th power acts as a common scalar on this direct sum exactly when
$r\gamma_\alpha=0$ in $\Z_m$ for every $\alpha$.  The least positive such
$r$ is
\[
 \frac{m}{\gcd(m,\gamma_2,\ldots,\gamma_\ell)}=m.
\]
Thus the induced projective logical action has exact order $m$.  Finally,
the assumed Knill--Laflamme equations for all Pauli operators of weight less
than $d_0$ say precisely that $P$ detects all such errors.  The resulting
rank-$K$ code therefore has distance at least $d_0$.
\end{proof}

Before specializing to rank two, we record why this normal form already
contains the LP layer of SSLP.  For a target distance $d_0$, every exact
completion produces the required moment data.  If
$\Pi_\alpha$ is the projector onto $E_\alpha$, then
\begin{equation}
 p_\alpha(x):=\frac{(\Pi_\alpha)_{xx}}{k_\alpha}
 \label{eq:general-sector-probabilities}
\end{equation}
is a probability distribution on the corresponding solution set, with zero
probabilities allowed.
For every $J\subseteq\{1,\ldots,n\}$ with $|J|<d_0$, the diagonal
Knill--Laflamme equations imply that
\begin{equation}
 \sum_x p_\alpha(x)\prod_{j\in J}x_j
 \label{eq:general-sector-moments}
\end{equation}
is independent of $\alpha$.  These linear moment equations provide a
preliminary test that cannot exclude a valid code.  They are therefore a
necessary projection of the full Knill--Laflamme system, not an independent
ansatz.  When an eigenspace has dimension greater than one, they do not
contain all within-sector matrix constraints, which remain part of the exact
KL completion.

For one logical qubit, the picture becomes especially transparent: the
general form has only two labeled solution sets, and one generator captures
the entire projective group.

\begin{corollary}[Rank-two SSLP normal form]
\label{cor:rank-two-sslp}
Let \(\mathcal Q\) be an exact \(((n,2,d))\) code with \(d\ge2\), and suppose
its projective diagonal logical group in the chosen product basis is
nontrivial.  Then that group is cyclic, say \(C_m\), and a generator has a
finite-order physical representative in reduced form \(D(a;q)\) for some
finite \(q\ge2\), with
\begin{equation}
 m\mid q.
 \label{eq:K2-bound}
\end{equation}
Up to local \(X\) conjugation, permutation of the physical qubits, a global
scalar, logical-state interchange, and replacement of the generator by its
\(u\)th power with
\(\gcd(u,q)=1\), there is an orthonormal logical eigenbasis of the form
\begin{align}
 \ket{0_L}&=\sum_{x\in S_0(a;q)}c_x\ket{x},
 \label{eq:0L-fiber}\\
 \ket{1_L}&=\sum_{y\in S_{q/m}(a;q)}d_y\ket{y},
 \label{eq:1L-fiber}
\end{align}
where zero coefficients are allowed and
\begin{equation}
 0\leq a_1\leq\cdots\leq a_n<q,
 \qquad
 \gcd(q,a_1,\ldots,a_n)=1.
 \label{eq:a-canonical}
\end{equation}
Conversely, any orthonormal pair of the form
\eqref{eq:0L-fiber}--\eqref{eq:1L-fiber} satisfying the full desired
Knill--Laflamme equations defines a code on which \(D(a;q)\) acts
projectively with order \(m\).
\end{corollary}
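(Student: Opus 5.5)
The corollary follows by specializing Theorem~\ref{thm:general-sslp-normal-form} to $K=2$. The one extra ingredient is cyclicity, plus a short normalization of the relative label.

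\emph{Cyclicity.} Since $d\ge2$, every $Z_j$ is detected, so Theorem~\ref{thm:finite-projective-group} makes $\mathcal G_{\mathrm{diag}}(P)$ finite abelian. For $K=2$, every code-preserving $D(\theta)$ is diagonal in the computational basis and preserves $\mathcal Q$. Hence the restrictions $D(\theta)|_{\mathcal Q}$ form a commuting family of unitaries, which we simultaneously diagonalize in a common orthonormal basis of $\mathcal Q$. Projectively each element is then determined by the eigenvalue ratio $\e^{2\pi\ii\phi}$. The map $\overline{D(\theta)}\mapsto \e^{2\pi\ii\phi}$ is an injective homomorphism into $U(1)$. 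A finite subgroup of $U(1)$ is cyclic, so $\mathcal G_{\mathrm{diag}}(P)\cong C_m$ with $m>1$ by nontriviality.

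\emph{Normal form.} Pick a generator $\overline U$ of order $m$ and apply Theorem~\ref{thm:general-sslp-normal-form} to it. This gives reduced data $(q,a)$ with $m\mid q$, already normalized as in Eq.~\eqref{eq:a-canonical}. Because $K=2$ and $\ell\ge2$, we must have $\ell=2$ and $k_1=k_2=1$. The labels are $\gamma_1=0$ and $\gamma_2=\gamma$, with $\gcd(m,\gamma)=1$. The two one-dimensional eigenspaces are spanned by unit vectors $\ket{0_L}$ and $\ket{1_L}$, supported in $S_0$ and $S_{(q/m)\gamma}$ respectively. These two vectors form an orthonormal logical eigenbasis, and zero coefficients are allowed since Eq.~\eqref{eq:general-colored-fibers} is only an inclusion.

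\emph{Normalizing the label to $\gamma=1$.} This is the only step beyond the theorem, and the main obstacle is that the exponent must be coprime to $q$, not just to $m$. Choose an integer $u$ with $u\gamma\equiv1\pmod m$ and $\gcd(u,q)=1$. Such a $u$ exists because the reduction $\Z_q^\times\to\Z_m^\times$ is surjective; for an explicit lift, take $u=u_0+km$ with $k$ chosen by the Chinese remainder theorem to avoid every prime factor of $q$ not dividing $m$.

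Replace $D(a;q)$ by $D(a;q)^u=D(ua;q)$. Because $\gcd(u,q)=1$, we have $\gcd(q,ua_1,\dots,ua_n)=1$, so the order $q$ and reducedness are preserved. The sectors transform as $S_b(a;q)=S_{ub}(ua;q)$. Thus $\ket{0_L}$ stays in $S_0$, and $\ket{1_L}$ moves to $S_{(q/m)u\gamma}=S_{q/m}$, using $(q/m)(u\gamma-1)\equiv0\pmod q$. Also, $\overline U^{\,u}$ still generates $C_m$ because $\gcd(u,m)=1$.

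After reducing $ua$ modulo $q$, re-sort the coordinates by a qubit permutation. Logical-state interchange is not actually needed here, but it is harmless and listed among the allowed moves.

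\emph{Converse.} This is the $\ell=2$, $\gamma=(0,1)$ case of the converse in Theorem~\ref{thm:general-sslp-normal-form}. There $\gcd(m,1)=1$, so $D(a;q)$ acts as $\diag(1,\e^{2\pi\ii/m})$ on the code, which has exact projective order $m$.
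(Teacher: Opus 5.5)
Your proposal is correct and follows essentially the same route as the paper. Both arguments embed the group into $U(1)$ by the eigenvalue ratio, specialize the general normal form to $\ell=2$, normalize $\gamma$ to $1$ with a unit $u$ coprime to $q$ obtained by the Chinese remainder theorem, and get the converse from the general theorem. The only cosmetic differences are that you simultaneously diagonalize the commuting family instead of fixing the eigenlines of one nonidentity element, and that you lift $u$ as $u_0+km$ instead of working prime power by prime power modulo $q$.
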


\begin{proof}
Because \(d\ge2\), the code detects every \(Z_j\).  By
Theorem~\ref{thm:finite-projective-group}, its projective diagonal logical
group \(G\) is finite abelian.

\emph{Cyclicity.}
Choose a nonidentity class $\overline V\in G$.  A unitary representative of
$\overline V$ has two distinct logical eigenlines $L_0,L_1$.  Physical
diagonal representatives commute, so every logical action in $G$ preserves
both lines.  If a
representative of $\overline W\in G$ acts on them by
$\lambda_0(W),\lambda_1(W)\in U(1)$, set
\[
 \chi(\overline W):=\frac{\lambda_1(W)}{\lambda_0(W)}.
\]
The ratio is unchanged when the representative is multiplied by a scalar,
so $\chi:G\to U(1)$ is a well-defined homomorphism.  Moreover,
$\chi(\overline W)=1$ exactly when $W$ acts as a common scalar on
$L_0\oplus L_1$, which is exactly the identity class in $G$.  Thus $\chi$
is injective.  Every finite subgroup of $U(1)$ is cyclic, so
$G\cong C_m$ for some $m\ge2$.

Choose a generator and apply
Theorem~\ref{thm:general-sslp-normal-form}.  A nontrivial logical unitary on
a two-dimensional space has two distinct eigenspaces, so \(\ell=2\) and
\(k_1=k_2=1\).  The two logical phase labels may be written as \(0\) and
\(\gamma\), where
\begin{equation}
 \gcd(\gamma,m)=1.
 \label{eq:rank-two-color}
\end{equation}

\emph{Unit lifting.}
We need a unit $u\in\Z_q^\times$ satisfying
$u\gamma\equiv1\pmod m$.  Write
\[
 q=\prod_p p^{e_p},
 \qquad
 m=\prod_p p^{f_p},
 \qquad 0\le f_p\le e_p.
\]
If $f_p>0$, choose a unit modulo $p^{e_p}$ that is congruent to
$\gamma^{-1}$ modulo $p^{f_p}$; if $f_p=0$, choose $1$ modulo $p^{e_p}$.
The Chinese remainder theorem combines these choices into a unit $u$ modulo
$q$, and the congruences for the primes dividing $m$ give
$u\gamma\equiv1\pmod m$.

Replace the physical generator by its \(u\)th power.  Because \(u\) is
relatively prime to $q$, its physical order remains $q$ and the
representative remains reduced.  Since $m\mid q$, it is also relatively
prime to $m$, so the logical action remains a generator.  Its coefficient
vector becomes $ua$ modulo $q$, while the second residue becomes
\[
 u\frac qm\gamma\equiv\frac qm\pmod q.
\]
Permute the qubits once more and choose coefficient representatives in
$\{0,\ldots,q-1\}$ to obtain Eq.~\eqref{eq:a-canonical}.  The two sectors
are one-dimensional and lie in distinct computational-basis fibers; choosing
a unit vector from each therefore gives the orthonormal basis in
Eqs.~\eqref{eq:0L-fiber}--\eqref{eq:1L-fiber}, with zero coefficients allowed.
Logical-state interchange is included among the harmless normalizations.

Conversely, the two displayed logical vectors are eigenvectors of \(D(a;q)\)
with eigenvalues
\[
 1,
 \qquad
 \omega_q^{q/m}=\e^{2\pi\ii/m}.
\]
Their eigenvalue ratio has exact order $m$, so the induced projective logical
action has exact order $m$.  If the two vectors are orthonormal and satisfy
the full Knill--Laflamme equations, their span is the required code.
\end{proof}

We can now see precisely what each SSLP layer contributes.  The SS layer is
forced by the code rather than imposed as an ansatz: it records the
finite-order physical representative, the subset-sum solution sets labeled
by logical eigenvalues, and the requirement that each logical eigenspace be
supported on its corresponding set.  The LP layer retains necessary linear
consequences of the diagonal Knill--Laflamme equations and provides a
preliminary test that cannot exclude a valid code.  Passing this test alone
does not guarantee that a code exists.  The final KL layer imposes the
complete matrix equations and is both necessary and sufficient for exact
realization.  Hence
\begin{equation}
 \boxed{
 \mathrm{SS}\;\bigl(\text{labeled subset-sum solution sets}\bigr)
 \ \longrightarrow\
 \mathrm{LP}\;\bigl(\text{diagonal moments}\bigr)
 \ \longrightarrow\
 \mathrm{KL}\;\bigl(\text{exact completion}\bigr)}
 \label{eq:three-layers}
\end{equation}
is an exact staged normal form for arbitrary \(K\), with
Corollary~\ref{cor:rank-two-sslp} giving the form with two solution sets used
below.

One obstacle remains before this normal form becomes a finite search: the
physical order \(q\) is known only to be finite and code dependent.
Section~\ref{sec:q-bounds} supplies a uniform cutoff at fixed parameters.
Searching only \(q=m\) is not exhaustive: the structural result guarantees
\(m\mid q\), and the minimum physical order can be strictly larger than the
projective logical order.

\subsection{Bounds on the physical order}
\label{sec:q-bounds}

The only remaining obstruction to a finite outer search is the absence of a
code-independent cutoff for $q$.  We now derive one from the geometry of the
logical phase sectors.  Their common first moments let us replace the
physical phases by a rational solution of an integral linear system, and
binary denominator bounds then control its physical order.  For \(K=2\),
the second moments available at distance at least three sharpen the bound.

We begin with the first-moment input.  Single-site detection forces every
logical phase sector to have the same barycenter, even when the sector has
dimension greater than one.

Suppose $D(\theta)$ preserves the code, with distinct logical eigenvalues
$\e^{2\pi\ii\phi_1},\ldots,\e^{2\pi\ii\phi_\ell}$.  Let $E_\alpha$ be the
corresponding logical eigenspaces, $\Pi_\alpha$ their orthogonal projectors,
and $k_\alpha=\rank\Pi_\alpha$.  Define their computational-basis supports by
\begin{equation}
 T_\alpha:=\{x\in\{0,1\}^n:(\Pi_\alpha)_{xx}>0\}.
 \label{eq:Talpha}
\end{equation}
Each support is nonempty because $\tr\Pi_\alpha=k_\alpha>0$.
As in the proof of Theorem~\ref{thm:general-sslp-normal-form}, the diagonal
entries of $D(\theta)\Pi_\alpha=\e^{2\pi\ii\phi_\alpha}\Pi_\alpha$ give
\begin{equation}
 \theta\cdot x\equiv\phi_\alpha\pmod1,
 \qquad x\in T_\alpha.
 \label{eq:phase-sector-congruence}
\end{equation}
The supports are therefore pairwise disjoint.  Every vector in $E_\alpha$
is supported on $T_\alpha$, since $(\Pi_\alpha)_{xx}$ is the sum of squared
$x$-amplitudes of an orthonormal basis of $E_\alpha$.

\begin{lemma}[Common first moments of logical phase sectors]
\label{lem:phase-sectors}
Suppose in addition that $PZ_jP=\lambda_jP$ for every $j$, and set
$\mu_j=(1-\lambda_j)/2$.  For $x\in T_\alpha$, define
\begin{equation}
 p_\alpha(x):=\frac{(\Pi_\alpha)_{xx}}{k_\alpha}.
 \label{eq:palpha}
\end{equation}
Then $p_\alpha$ is a positive probability distribution on $T_\alpha$, and
all these distributions have the same first moment:
\begin{equation}
 \sum_{x\in T_\alpha}p_\alpha(x)x=\mu,
 \qquad \mu=(\mu_1,\ldots,\mu_n).
 \label{eq:common-mean}
\end{equation}
\end{lemma}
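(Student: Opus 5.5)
The argument has two parts. First we check that each $p_\alpha$ is a positive probability distribution. Then we compute its first moment by sandwiching the detection identity $PN_jP=\mu_jP$ of Eq.~\eqref{eq:Nj-detection} between eigenspace projectors.

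\emph{Positivity and normalization.} Since $\Pi_\alpha$ is an orthogonal projector, its diagonal entries $(\Pi_\alpha)_{xx}=\norm{\Pi_\alpha\ket{x}}^2$ are nonnegative. By the definition \eqref{eq:Talpha}, they are strictly positive exactly on $T_\alpha$. Summing over all $x$ gives $\tr\Pi_\alpha=k_\alpha$, and all terms outside $T_\alpha$ vanish. Hence $\sum_{x\in T_\alpha}p_\alpha(x)=1$, so $p_\alpha$ is a positive probability distribution on $T_\alpha$.

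\emph{Common first moment.} Each $E_\alpha\subseteq\mathcal Q$, so $\Pi_\alpha=\Pi_\alpha P=P\Pi_\alpha$. Multiplying $PN_jP=\mu_jP$ on the left and on the right by $\Pi_\alpha$ gives
\[
 \Pi_\alpha N_j\Pi_\alpha=\mu_j\Pi_\alpha .
\]
Taking the trace of both sides and using cyclicity together with $\Pi_\alpha^2=\Pi_\alpha$ yields
\[
 \tr(N_j\Pi_\alpha)=\mu_jk_\alpha .
\]
The operator $N_j$ is diagonal, with eigenvalue $x_j$ on $\ket{x}$. Therefore
\[
 \tr(N_j\Pi_\alpha)=\sum_x x_j(\Pi_\alpha)_{xx}=k_\alpha\sum_{x\in T_\alpha}p_\alpha(x)x_j .
\]
Dividing by $k_\alpha$ gives Eq.~\eqref{eq:common-mean} coordinatewise. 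The right-hand side $\mu$ does not depend on $\alpha$, so all the distributions share this first moment.

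The only delicate point is the sandwiching step. Here one must use that $\Pi_\alpha$ is a subprojector of $P$, which holds because the eigenspaces of the code-preserving $D(\theta)$ restricted to $\mathcal Q$ lie inside $\mathcal Q$. No commutation of $N_j$ with $\Pi_\alpha$ is needed, because we take the trace. This argument also covers eigenspaces of dimension greater than one, which is the reason for averaging by $k_\alpha$.
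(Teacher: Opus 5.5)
Your proposal is correct and follows the paper's proof essentially step for step. You get normalization from $\tr\Pi_\alpha=k_\alpha$, sandwich $PN_jP=\mu_jP$ between copies of $\Pi_\alpha$ using $P\Pi_\alpha=\Pi_\alpha P=\Pi_\alpha$, and take traces to obtain $\sum_x x_j(\Pi_\alpha)_{xx}=\mu_jk_\alpha$.
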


\begin{proof}
Positivity follows from the definition of $T_\alpha$, and
\[
 \sum_{x\in T_\alpha}p_\alpha(x)
 =\frac{\tr\Pi_\alpha}{k_\alpha}=1.
\]
The detection assumption is equivalent to $PN_jP=\mu_jP$.  Since
$P\Pi_\alpha=\Pi_\alpha P=\Pi_\alpha$, it follows that
$\Pi_\alpha N_j\Pi_\alpha=\mu_j\Pi_\alpha$.  Taking traces gives
\[
 \sum_{x\in T_\alpha}x_j(\Pi_\alpha)_{xx}=\mu_jk_\alpha.
\]
Dividing by $k_\alpha$ proves Eq.~\eqref{eq:common-mean}.
\end{proof}

At first sight, irrational sector probabilities might seem to obstruct a
rational phase replacement.  They do not: the common barycenter is the only
consequence of single-site detection that the argument needs.

\begin{lemma}[Common-mean rational replacement]
\label{lem:common-mean-rational-replacement}
Let $T_1,\ldots,T_\ell\subseteq\{0,1\}^n$ carry probability
distributions $p_\alpha$ with a common first moment
\[
 \sum_{x\in T_\alpha}p_\alpha(x)x=\mu.
\]
Suppose every vector in $E_\alpha$ is supported on $T_\alpha$, and let
$\theta,\theta'\in\R^n$.  Set $\Delta=\theta'-\theta$.  If
\[
 \Delta\cdot(x-y)=0
 \qquad
 \text{for all $x,y\in T_\alpha$ and every $\alpha$,}
\]
then
\[
 \Delta\cdot x=\Delta\cdot\mu
 \qquad
 \text{for every $x\in\bigcup_{\alpha=1}^{\ell}T_\alpha$.}
\]
Consequently,
\[
 \left.D(\theta')\right|_{\bigoplus_\alpha E_\alpha}
 =
 \e^{2\pi\ii\,\Delta\cdot\mu}
 \left.D(\theta)\right|_{\bigoplus_\alpha E_\alpha}.
\]
In particular, if $D(\theta)$ preserves
$\bigoplus_\alpha E_\alpha$, then so does $D(\theta')$, and the two
gates induce the same projective action there.
\end{lemma}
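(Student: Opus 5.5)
The plan is to use the hypothesis that $\Delta$ is constant on each support, and then average over each support using the common barycenter. Fix $\alpha$ and choose any reference string $y_\alpha\in T_\alpha$; this is possible because supports are nonempty. The hypothesis $\Delta\cdot(x-y)=0$ for $x,y\in T_\alpha$ says that $\Delta\cdot x=c_\alpha:=\Delta\cdot y_\alpha$ for every $x\in T_\alpha$. Multiplying by $p_\alpha(x)$ and summing over $T_\alpha$, we use $\sum_x p_\alpha(x)=1$ and linearity of the dot product together with the common first moment:
\[
 c_\alpha=\sum_{x\in T_\alpha}p_\alpha(x)\,\Delta\cdot x=\Delta\cdot\sum_{x\in T_\alpha}p_\alpha(x)x=\Delta\cdot\mu .
\]
So the constant $c_\alpha$ is the same value $\Delta\cdot\mu$ for every $\alpha$, which gives the first assertion on the union of the supports. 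This is the only point where the common barycenter enters: it turns the separate per-sector constants into a single global one. I expect it to be the main conceptual step, although the computation is immediate.

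For the operator statement, recall that $D(\theta)\ket{x}=\e^{2\pi\ii\theta\cdot x}\ket{x}$, so
\[
 D(\theta')\ket{x}=\e^{2\pi\ii\Delta\cdot x}D(\theta)\ket{x}=\e^{2\pi\ii\Delta\cdot\mu}D(\theta)\ket{x}
\]
for every $x$ in the union of the $T_\alpha$. Each vector of $E_\alpha$ is supported on $T_\alpha$, so it is a linear combination of such basis vectors. By linearity the identity therefore holds on each $E_\alpha$, and hence on the direct sum $\bigoplus_\alpha E_\alpha$.

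Finally, if $D(\theta)$ maps $\bigoplus_\alpha E_\alpha$ into itself, then so does $D(\theta')$, since it differs there only by the unimodular scalar $\e^{2\pi\ii\Delta\cdot\mu}$. The two restrictions differ by an overall phase, so they define the same projective logical action. No step requires $\Delta$ to be rational or the probabilities to be rational; only the real-linear identity above is used.
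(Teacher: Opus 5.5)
Your proof is correct and follows the paper's argument essentially step for step: $\Delta\cdot x$ is constant on each $T_\alpha$, averaging against $p_\alpha$ identifies each constant with $\Delta\cdot\mu$, and the pointwise phase identity extends to each $E_\alpha$ by linearity. Your remark that each $T_\alpha$ is nonempty because it carries a probability distribution makes explicit a small point the paper leaves implicit.
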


\begin{proof}
For each $\alpha$, the hypothesis says that $\Delta\cdot x$ has a
constant value $c_\alpha$ on $T_\alpha$.  Averaging that constant with
respect to $p_\alpha$ gives
\[
 c_\alpha
 =\sum_{x\in T_\alpha}p_\alpha(x)\,\Delta\cdot x
 =\Delta\cdot
   \left(\sum_{x\in T_\alpha}p_\alpha(x)x\right)
 =\Delta\cdot\mu.
\]
Thus the constants are independent of $\alpha$.  For every
$v\in E_\alpha$, the diagonal ratio of the two gates is constant on the
support of $v$, and hence
\[
 D(\theta')v=
 \e^{2\pi\ii\,\Delta\cdot\mu}D(\theta)v.
\]
The claimed equality follows by taking the direct sum over the logical
eigenspaces.
\end{proof}

This is the key economy of the argument: no rationality of the probabilities
is required, and no weight-two Knill--Laflamme equation is used.  Second
moments enter only in the distance-three refinement below.

Having reduced rational replacement to a linear problem, we now need a
uniform denominator bound.  We obtain it from a binary formulation of the
phase equations.  After normalizing the first logical phase to zero, the
system has $n$ physical coordinates and $\ell-1$ auxiliary phase coordinates.
Selecting independent equations and variables gives an invertible binary
matrix of size at most $n+\ell-1$, whose Smith normal form controls the
denominators of the solution.

Here a binary matrix has entries zero or one, but all ranks and inverses
are computed over $\mathbb R$, equivalently $\mathbb Q$, rather than
modulo two.  For an invertible binary matrix $C\in\{0,1\}^{s\times s}$,
its \emph{Smith normal form} is
\begin{equation}
 UCV=\diag(d_1(C),\ldots,d_s(C)),
 \qquad
 1\leq d_1(C)\mid\cdots\mid d_s(C),
 \label{eq:snf}
\end{equation}
where $U,V$ are integer matrices with determinant $+1$ or $-1$.  The numbers
$d_i(C)$ are its Smith invariant factors.  The final factor $d_s(C)$
provides a common multiplier that clears the denominators of solutions.
Equivalently, it is the exponent of the finite cokernel
$\Z^s/C\Z^s$.  The following lemma gives the form used below.

\begin{lemma}[Elementary denominator clearing]
\label{lem:denominator-clearing}
If $Cz=h$ with $h\in\Z^s$, then
\begin{equation}
 d_s(C)z\in\Z^s.
 \label{eq:clear-denom}
\end{equation}
Consequently, the least common denominator of the coordinates of $z$ divides
$d_s(C)$.
\end{lemma}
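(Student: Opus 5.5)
The plan is to pass to Smith normal form and reduce the statement to the diagonal case. By Eq.~\eqref{eq:snf} there are unimodular integer matrices $U,V$ with $UCV=\Delta:=\diag(d_1(C),\ldots,d_s(C))$. Since $C$ is invertible over $\Q$, $\det C\neq0$. Because $\det U,\det V=\pm1$, it follows that $\prod_i d_i(C)=\pm\det C\neq0$, so every invariant factor is a nonzero (positive) integer.

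Next I will rewrite the equation in the new coordinates. Put $w:=V^{-1}z$, which is legitimate since $V^{-1}$ is again an integer matrix with determinant $\pm1$. Then
\[
 \Delta w=UCVw=UCz=Uh\in\Z^s .
\]
Coordinatewise this reads $d_i(C)w_i=(Uh)_i\in\Z$.

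The divisibility chain $d_i(C)\mid d_s(C)$ is the only place where the Smith structure is used beyond diagonalization. It lets me multiply each coordinate equation by the integer $d_s(C)/d_i(C)$ to get $d_s(C)w_i\in\Z$ for every $i$, that is, $d_s(C)w\in\Z^s$. Applying the integer matrix $V$ then gives
\[
 d_s(C)z=V\bigl(d_s(C)w\bigr)\in\Z^s ,
\]
which is Eq.~\eqref{eq:clear-denom}.

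For the consequence, let $L$ be the least common denominator of the coordinates of $z$. Then $L$ generates the ideal $\{N\in\Z: Nz\in\Z^s\}$, and this ideal contains $d_s(C)$, so $L\mid d_s(C)$.

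There is no real obstacle here. The only point needing care is that the Smith transformation uses matrices that are unimodular over $\Z$, not merely invertible over $\Q$. This guarantees that integrality is transported in both directions, through $V^{-1}$ and through $V$. It is also what justifies the alternative description of $d_s(C)$ as the exponent of $\Z^s/C\Z^s\cong\bigoplus_i\Z/d_i(C)\Z$. From that viewpoint, the lemma says that $d_s(C)h\in C\Z^s$, and $z=C^{-1}h$ is the unique rational preimage.
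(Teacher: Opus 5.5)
Your proof is correct and is essentially the paper's argument: the paper writes $C^{-1}=V\diag(d_1(C)^{-1},\ldots,d_s(C)^{-1})U$ and notes that $d_s(C)C^{-1}$ is integral because each $d_i(C)$ divides $d_s(C)$, which is exactly your coordinatewise step $d_s(C)w_i=(d_s(C)/d_i(C))(Uh)_i$ followed by applying $V$. Your ideal-generator argument for the least-common-denominator consequence spells out a step the paper leaves implicit.
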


\begin{proof}
The Smith decomposition gives
\[
 C^{-1}=V\diag(d_1(C)^{-1},\ldots,d_s(C)^{-1})U.
\]
Since every $d_i(C)$ divides $d_s(C)$, the matrix $d_s(C)C^{-1}$ is integral.
Applying it to $h$ proves the claim.
\end{proof}

For a positive integer $\nu$, define
\begin{equation}
 \beta_\nu:=
 \max_{1\leq s\leq\nu}
 \max_{\substack{C\in\{0,1\}^{s\times s}\\ \det C\neq0}}
 d_s(C).
 \label{eq:beta}
\end{equation}
Thus $\beta_\nu$ bounds the multiplier needed to clear solution
denominators for every invertible binary system with at most $\nu$
variables and integer right-hand side.

We are now ready to state the general cutoff.

\begin{theorem}[Physical-order bound for general code dimension]
\label{thm:colored-finite-lift}
Let $P$ be a rank-$K$ code projector satisfying
\[
 PZ_jP=\lambda_jP,
 \qquad j=1,\ldots,n.
\]
Suppose $D(\theta)$ preserves the code and induces a nontrivial projective
logical action $\overline U$, with order $m$ and $\ell\ge2$ distinct logical
eigenvalues.  Then the same projective action has a physical representative in
reduced form $D(a;q)$ satisfying
\begin{equation}
 m\mid q,
 \qquad
 q\leq\beta_{n+\ell-1}\leq\beta_{n+K-1}.
 \label{eq:general-q-bound}
\end{equation}
More precisely, the representative can be obtained from an invertible
binary matrix $C\in\{0,1\}^{s\times s}$ such that
\begin{equation}
 q\mid d_s(C),
 \qquad
 1\le s\le n+\ell-1.
 \label{eq:q-divides-smith}
\end{equation}
In particular,
\begin{equation}
 q_{\min}(\overline U)\leq\beta_{n+\ell-1}.
 \label{eq:qmin-general}
\end{equation}
\end{theorem}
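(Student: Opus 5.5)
We set up a linear system whose solutions are phase vectors realizing the same projective action, then choose a vertex-like solution determined by an invertible binary subsystem. Let $T_1,\ldots,T_\ell$ be the sector supports from Eq.~\eqref{eq:Talpha}, ordered so that $\phi_1$ is the first logical phase. Consider unknowns $(t,\psi_2,\ldots,\psi_\ell)\in\R^{n+\ell-1}$ and the affine system
\[
 x\cdot t=\psi_\alpha+h_{x},\qquad x\in T_\alpha,\quad \psi_1:=0,
\]
where the integers $h_x$ are fixed by the original solution. Concretely, lift $\theta$ to $\R^n$, pick $x_\alpha\in T_\alpha$, set $\psi_\alpha:=\theta\cdot x_\alpha-\theta\cdot x_1$ as real numbers, and put $h_x:=\theta\cdot x-\theta\cdot x_1-\psi_\alpha\in\Z$ by Eq.~\eqref{eq:phase-sector-congruence}. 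After first subtracting $\theta\cdot x_1$ via a scalar, so that the right side for $\alpha=1$ is integral, every row has a binary coefficient vector $(x,-e_\alpha)$. Flipping the sign of the auxiliary variables $\psi_\alpha\mapsto-\psi_\alpha$ makes all coefficients lie in $\{0,1\}$. The real point $(\theta,\psi)$ solves this system, so it is consistent.

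The key step is choosing a rational solution with controlled denominators. Let $R$ be the row space of the coefficient matrix, of rank $s\le n+\ell-1$. Choose $s$ linearly independent rows and $s$ columns forming an invertible $s\times s$ binary submatrix $C$. Set the remaining $n+\ell-1-s$ variables to zero, which is integral, and solve $Cz=h_{\text{sel}}$. By Lemma~\ref{lem:denominator-clearing}, $d_s(C)z$ is integral. Since the chosen rows span the row space and the system is consistent, this solution satisfies every row, giving $(t',\psi')$ with $d_s(C)t'\in\Z^n$. Then $t'$ satisfies the same sector congruences with each $T_\alpha$ mapped to the phase $\psi'_\alpha$, and $\Delta=t'-\theta$ satisfies $\Delta\cdot(x-y)=0$ within each $T_\alpha$, because both vectors give $x\cdot(\cdot)-y\cdot(\cdot)=h_x-h_y$ there. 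Lemma~\ref{lem:common-mean-rational-replacement}, with Lemma~\ref{lem:phase-sectors} supplying the common mean, then shows that $D(t')$ preserves the code and induces the same projective action $\overline U$.

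Finally, let $q$ be the exact order of $t'$ in $\T^n$. Exactly as in the proof of Theorem~\ref{thm:finite-projective-group}, this gives a reduced form $D(a;q)$ with $q\mid d_s(C)$, $q\ge2$ since $\overline U$ is nontrivial, and $m\mid q$. Because $\ell\le K$, we get $q\le d_s(C)\le\beta_{n+\ell-1}\le\beta_{n+K-1}$, using monotonicity of $\beta_\nu$ in $\nu$, which holds by its definition as a maximum over $s\le\nu$.

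The main obstacle is the bookkeeping that makes the coefficient matrix genuinely binary while keeping the right-hand side integral. The sign of $\psi$ and the normalization $\psi_1=0$ must be handled so that rows read $(x,e_\alpha)$ with integer constants. The other delicate point is that solving only a maximal independent subsystem, with free variables set to zero, still satisfies every equation. That requires consistency, which is guaranteed by the real solution coming from $\theta$.
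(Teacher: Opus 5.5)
There is a genuine gap at the consistency step, and it sits exactly where you flagged the main obstacle. Your system $x\cdot t=\psi_\alpha+h_x$, with $\psi_1:=0$ and your integers $h_x$, is solved by $(\theta,\psi)$ only if $\theta\cdot x_1\in\Z$, that is, only if $D(\theta)$ already acts trivially on $E_1$.

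``Subtracting $\theta\cdot x_1$ via a scalar'' is not available inside the unknowns. Every $D(t)$ fixes $\ket{0^n}$, so no phase vector $t$ implements $\e^{-2\pi\ii\,\theta\cdot x_1}D(\theta)$. If you instead keep that constant on the right-hand side, the right-hand side is no longer integral, and Lemma~\ref{lem:denominator-clearing} no longer applies.

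The system can genuinely be inconsistent. Any solution $t$ satisfies $(t-\theta)\cdot(x-y)=0$ within each sector. By Lemma~\ref{lem:common-mean-rational-replacement}, $t-\theta$ then takes a single common value on all supports. If, say, $0^n\in T_2$, that value is $0$, so $t\cdot x=\theta\cdot x$ on $T_1$. But $\phi_2\equiv0$ and the eigenvalues are distinct, so $\phi_1\not\equiv0$, and hence $\theta\cdot x\notin\Z$ on $T_1$. This contradicts $t\cdot x=h_x$ there. So with an arbitrary ordering of the sectors, or when no support contains $0^n$, your pivot-solve has no consistent system to act on.

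The missing idea is the paper's local normalization. Pick $x_1\in T_1$ and conjugate both the code and the gate by $X^{x_1}$. This relabels $x\mapsto x\oplus x_1$ and changes the phases by $\theta_j\mapsto(-1)^{(x_1)_j}\theta_j$, up to a discarded common scalar. Being an affine bit map, it preserves single-site detection and the common first moment.

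After this conjugation, $0^n\in T_1$ and $\phi_1=0$ genuinely. The lifted system now has binary rows and an integral right-hand side, and the transformed $\theta$ solves it. Your pivot-solve, the common-mean replacement, and the denominator argument then go through as you wrote them; the scalar $\e^{2\pi\ii\,\Delta\cdot\mu}$ is even $1$, because $\Delta\cdot0^n=0$.

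At the end, undo the conjugation. It changes $\theta'$ only by coordinate signs, so it preserves the least common denominator, reducedness, and the projective action. In the original coordinates this is the same as using rows $x\oplus x_1$ after negating the $t_j$ with $(x_1)_j=1$.

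Keeping $\psi_1$ as an extra free variable would also restore consistency. However, the system then has $n+\ell$ columns, which yields only $\beta_{n+\ell}$, not the claimed $\beta_{n+\ell-1}$.

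You should also record that $s\ge1$, since $\beta_\nu$ ranges over $1\le s\le\nu$. This holds because a row coming from $T_2$ has entry $1$ in an auxiliary column.

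With the conjugation step in place, your argument coincides with the paper's proof.
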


\begin{proof}
Let $T_1,\ldots,T_\ell$ be the logical phase supports in
Eq.~\eqref{eq:Talpha}.  By Lemma~\ref{lem:phase-sectors}, their probability
distributions have a common first moment $\mu$.

We first normalize the first sector so that its phase requires no auxiliary
variable.  Choose $x_0\in T_1$ and relabel the computational basis by
\[
 x\longmapsto x\oplus x_0.
\]
This is implemented by conjugating both the code and the gate by $X^{x_0}$.
Apart from a common scalar, the physical phases change by
\begin{equation}
 \theta_j\longmapsto(-1)^{(x_0)_j}\theta_j.
 \label{eq:X-conjugation}
\end{equation}
On the $j$th coordinate this relabeling is the affine map
$x_j\mapsto x_j$ or $x_j\mapsto1-x_j$.  It therefore sends the common
first moment of all sector distributions to another common first moment.
Single-site detection is preserved as well.  Discard the common scalar and
rename the transformed data.  Since the first support now contains $0^n$,
and $D(\theta)\ket{0^n}=\ket{0^n}$, we have
\begin{equation}
 0^n\in T_1,
 \qquad
 \phi_1=0.
 \label{eq:normalized-first-sector}
\end{equation}

Choose real lifts of the phase congruences.  They take the form
\begin{align}
 \theta\cdot x&=h_x,
 &&x\in T_1,
 \label{eq:first-sector-eq}\\
 \theta\cdot x-\phi_\alpha&=h_x,
 &&x\in T_\alpha,\quad \alpha=2,\ldots,\ell,
 \label{eq:other-sector-eq}
\end{align}
where $h_x\in\Z$.  Introduce the auxiliary phase coordinates and write
\[
 u=(\theta_1,\ldots,\theta_n,-\phi_2,\ldots,-\phi_\ell)^T.
\]
Equations~\eqref{eq:first-sector-eq}--\eqref{eq:other-sector-eq} then form
the consistent integral system
\begin{equation}
 Ru=h,
 \label{eq:Ru=h}
\end{equation}
where $R$ is binary and has $n+\ell-1$ columns.

Let $s=\rank_{\Q}R$, equivalently $\rank_{\R}R$.  Here $s\ge1$:
since $\ell\ge2$, the nonempty support $T_2$ supplies a row of $R$, and
that row has entry $1$ in the auxiliary coordinate corresponding to
$-\phi_2$.  Thus
\[
 1\le s\le n+\ell-1.
\]

Select $s$ linearly independent rows of $R$, indexed by a set $I$, and
then select $s$ columns, indexed by a set $J$, such that the square
submatrix
\[
 C=R_{I,J}\in\{0,1\}^{s\times s}
\]
is invertible.  Let $h_I$ be the corresponding part of the right-hand
side.  Define $u'$ by setting its coordinates outside $J$ equal to zero
and solving
\begin{equation}
 Cu'_J=h_I.
 \label{eq:pivot-solve}
\end{equation}
Since $C$ and $h_I$ are integral, $u'$ is rational.

We verify explicitly that $u'$ solves the full system.  Each omitted row
$r$ of $R$ is a linear combination
\[
 r=\sum_{i\in I}c_iR_i
\]
of the selected rows.  Because the original system is consistent, its
solution $u$ gives
\[
 h_r=ru=\sum_{i\in I}c_iR_i u
     =\sum_{i\in I}c_i h_i.
\]
The selected equations hold for $u'$ by construction, so the same
calculation gives
\[
 ru'=\sum_{i\in I}c_iR_i u'
    =\sum_{i\in I}c_i h_i
    =h_r.
\]
Hence $Ru'=h$.

Lemma~\ref{lem:denominator-clearing} applied to
Eq.~\eqref{eq:pivot-solve} gives
\[
 d_s(C)u'_J\in\Z^s.
\]
The remaining coordinates of $u'$ vanish, and therefore
\[
 d_s(C)u'\in\Z^{n+\ell-1}.
\]

Let $\theta'$ be the physical part of $u'$ and set
$\Delta=\theta'-\theta$.  For $\alpha\ge2$, write $-\phi'_\alpha$ for
the corresponding auxiliary coordinate of $u'$, and set
$\gamma_1=0$ and $\gamma_\alpha=\phi'_\alpha-\phi_\alpha$.  Because
$u$ and $u'$ solve the same lifted system, subtracting their equations
at $x\in T_\alpha$ first gives
\[
 \Delta\cdot x=\gamma_\alpha.
\]
Subtracting this equality at two points of the same sector then gives
\[
 \Delta\cdot(x-y)=0,
 \qquad x,y\in T_\alpha.
\]
Lemma~\ref{lem:common-mean-rational-replacement} now gives
\begin{equation}
 \Delta\cdot x=\Delta\cdot\mu
 \qquad\text{for every }x\in\bigcup_\alpha T_\alpha,
 \label{eq:common-phase-shift}
\end{equation}
and hence
\begin{equation}
 D(\theta')\big|_{\mathcal Q}
 =\e^{2\pi\ii\,\Delta\cdot\mu}D(\theta)\big|_{\mathcal Q}.
 \label{eq:rational-replacement}
\end{equation}
In the present normalization this scalar is $1$: the left side of
Eq.~\eqref{eq:common-phase-shift} vanishes at $0^n\in T_1$, so
$\Delta\cdot\mu=0$.  Thus $D(\theta')$ preserves the normalized code and
induces exactly the same logical action there.

Let $q$ be the least positive integer for which $q\theta'\in\Z^n$.  The
integrality of $d_s(C)u'$ and the elementary least-common-denominator
property imply
\[
 q\mid d_s(C).
\]
Put $\widetilde a_j=q\theta'_j\in\Z$, and let $a_j$ be its residue modulo
$q$.  Then
\[
 D(\theta')=D(a;q).
\]
The minimality of $q$ shows that this gate has exact physical order $q$.
It also gives the gcd-one condition: if an integer $g>1$ divided $q$ and
every canonical representative of $a_j$, then it would divide every
$\widetilde a_j$, because $\widetilde a_j\equiv a_j\pmod q$.  Hence
\[
 \frac qg\,\theta'_j=\frac{\widetilde a_j}{g}\in\Z
 \qquad\text{for every }j,
\]
contrary to the minimality of $q$.

The induced projective logical action has order $m$.  Since the $q$th
power of the physical gate is the identity, its projective restriction
also has $q$th power equal to the identity.  Therefore $m\mid q$.
Nontriviality gives $2\le m\le q$, so $q\ge2$; together with the gcd-one
condition, this proves that $D(a;q)$ is in reduced form.

Undoing the basis relabeling changes the physical phases only by the sign
changes in Eq.~\eqref{eq:X-conjugation}, together with a common scalar.
Sign changes preserve the least common denominator and the reducedness
condition, while the scalar does not affect the projective logical action.
Thus the resulting representative of the original action still has exact
physical order $q$.

Finally,
\[
 q\le d_s(C)
 \le\beta_{n+\ell-1}
 \le\beta_{n+K-1},
\]
where the first inequality follows from $q\mid d_s(C)$, the second uses
$1\le s\le n+\ell-1$, and the last uses $\ell\le K$.
Equation~\eqref{eq:qmin-general} follows because
$q_{\min}(\overline U)$ is the minimum over all physical representatives
of the same projective action.
\end{proof}

This closes the general finite-search argument.  Every exact code of distance
at least two satisfies the detection hypothesis, so the theorem, together
with the SSLP normal form of Section~\ref{sec:sslp-normal-form}, supplies a
finite range of physical representative orders at fixed $n$ and $K$.

Taking the largest final Smith invariant factor among the nonsingular
binary matrices listed in Tables~A.3 and~A.5 of
Ref.~\cite{Zivkovic} gives
\begin{equation}
 \beta_5=5,
 \qquad
 \beta_6=9,
 \qquad
 \beta_7=18,
 \qquad
 \beta_8=42,
 \qquad
 \beta_9=120.
 \label{eq:small-beta}
\end{equation}
For \(K=2\), every nontrivial projective diagonal action has \(\ell=2\), so
Theorem~\ref{thm:colored-finite-lift} gives a reduced physical representative
of the same projective action satisfying
\begin{equation}
 m\mid q,
 \qquad
 2\le q\le\beta_{n+1}.
 \label{eq:K2-q-bound-summary}
\end{equation}
For six qubits this gives \(q\le18\).  Distance-three geometry improves the
bound by one matrix dimension, lowering the cutoff to \(q\le9\), as we now
show.  The improvement comes from a consequence of the second moments: the
two intrinsic supports must have the same affine span.

For a nonempty set $S\subseteq\R^n$, write $\aff(S)$ for its affine span.
Equivalently, for any $x_0\in S$,
\[
 \aff(S)=x_0+\spanR\{x-x_0:x\in S\},
\]
which is independent of the choice of $x_0$.

\begin{lemma}[Positive affine moments determine the affine span]
\label{lem:positive-affine-moments}
Let $T_0,T_1\subset\R^n$ be finite nonempty sets, and let $p_k$ be a
strictly positive probability distribution on $T_k$.  Suppose that, for
every affine function $f(x)=c+w\cdot x$,
\begin{equation}
 \sum_{x\in T_0}p_0(x)(c+w\cdot x)^2
 =\sum_{y\in T_1}p_1(y)(c+w\cdot y)^2.
 \label{eq:affine-square-moments}
\end{equation}
Then
\begin{equation}
 \aff(T_0)=\aff(T_1).
 \label{eq:same-affine-span}
\end{equation}
\end{lemma}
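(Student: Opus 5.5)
We will prove the inclusion $\aff(T_1)\subseteq\aff(T_0)$; the reverse inclusion follows by exchanging the roles of the two sets, since the hypothesis is symmetric. The idea is that an affine function vanishing on $T_0$ has zero second moment under $p_0$, so it must have zero second moment under $p_1$ as well, and strict positivity of $p_1$ then forces it to vanish on all of $T_1$. Since an affine subspace is cut out by the affine functions vanishing on it, this gives the inclusion.

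Concretely, let $f(x)=c+w\cdot x$ be any affine function with $f|_{T_0}=0$. Then the left side of Eq.~\eqref{eq:affine-square-moments} vanishes, so
\[
 \sum_{y\in T_1}p_1(y)f(y)^2=0 .
\]
Every term is nonnegative and $p_1(y)>0$ on $T_1$, so $f(y)=0$ for every $y\in T_1$. Hence every affine function vanishing on $T_0$ also vanishes on $T_1$.

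It remains to invoke the elementary description of affine spans. For a nonempty finite set $S$, the affine span $\aff(S)$ equals the common zero set of all affine functions vanishing on $S$. To see this, pick $x_0\in S$ and let $V=\spanR\{x-x_0:x\in S\}$. Then $\aff(S)=x_0+V$. A point $z\notin x_0+V$ has $z-x_0\notin V$, so there is a $w\in V^{\perp}$ with $w\cdot(z-x_0)\neq0$. The function $f(x)=w\cdot(x-x_0)$ vanishes on $S$ but not at $z$. Applying this with $S=T_0$, every point of $T_1$ lies in $\aff(T_0)$, so $\aff(T_1)\subseteq\aff(T_0)$.

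We expect no step to be a genuine obstacle; the only point needing care is the positivity. Strict positivity of $p_1$ is exactly what turns the vanishing of a weighted sum of squares into pointwise vanishing on the whole support $T_1$. Without it, the conclusion would only hold on the support of $p_1$, not on $T_1$.
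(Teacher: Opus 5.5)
Your proposal is correct and follows the paper's proof: strict positivity shows that an affine function vanishes on $T_0$ if and only if it vanishes on $T_1$, and the affine span is the common zero set of the affine functions vanishing on the set. You also prove that last characterization explicitly using $V^{\perp}$, which the paper simply cites as standard.
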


\begin{proof}
Because every weight is strictly positive, the left-hand side of
Eq.~\eqref{eq:affine-square-moments} vanishes if and only if $f$ vanishes
on all of $T_0$.  The same statement holds on the right for $T_1$.
Thus an affine function vanishes on $T_0$ if and only if it vanishes on
$T_1$.  The affine span of a set is the common zero set of all affine
functions that vanish on that set, so the two affine spans are equal.
\end{proof}

Applying this affine-span identity to the phase equations yields the sharper
rank-two bound.

\begin{corollary}[Distance-three physical-order bound for rank-two codes]
\label{cor:d3-physical-order-bound}
Let $\mathcal Q$ be an exact $((n,2,d))$ code with $d\ge3$, and let
$D(\theta)$ be a product gate diagonal in the chosen basis that preserves
$\mathcal Q$ and induces a nontrivial projective logical action
$\overline U$ of order $m$.  The restriction of $D(\theta)$ to
$\mathcal Q$ has two distinct one-dimensional eigenspaces $E_0,E_1$;
let $\Pi_0,\Pi_1$ be their orthogonal projectors and define the intrinsic
supports
\[
 T_k:=\{x\in\{0,1\}^n:(\Pi_k)_{xx}>0\},
 \qquad k=0,1.
\]
If $\ket{\psi_k}$ is any normalized vector spanning $E_k$, then
$(\Pi_k)_{xx}=|\langle x\mid\psi_k\rangle|^2$.  Thus $T_k$ is also the
ordinary nonzero computational-basis support of a logical eigenvector.
Then
\[
 \aff(T_0)=\aff(T_1).
\]
If $r$ is their common affine dimension, then $1\le r\le n$, and the
same projective action has a physical representative in reduced form
$D(a;q)$ of exact order $q$.  Moreover, after a local $X$ normalization,
there is an invertible binary matrix $C\in\{0,1\}^{r\times r}$, obtained
as a coordinate minor of a matrix whose rows are a basis selected from
one of the two supports, such that
\begin{equation}
 m\mid q,
 \qquad
 q\mid d_r(C),
 \qquad
 2\le q\le d_r(C)\le\beta_r\le\beta_n.
 \label{eq:d3-full-bound}
\end{equation}
Here invertibility and rank are over $\Q$, equivalently over $\R$; no
rank is taken over $\mathbb F_2$.  In particular,
\begin{equation}
 q_{\min}(\overline U)\le\beta_r\le\beta_n.
 \label{eq:d3-qmin}
\end{equation}
\end{corollary}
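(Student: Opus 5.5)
First I establish $\aff(T_0)=\aff(T_1)$ by applying Lemma~\ref{lem:positive-affine-moments} to $p_k(x)=|\langle x\mid\psi_k\rangle|^2$. These distributions are strictly positive on $T_k$ by definition. Since $d\ge3$, the code detects every $Z_i$ and every $Z_iZ_j$, so $\bra{\psi_0}Z_J\ket{\psi_0}=\bra{\psi_1}Z_J\ket{\psi_1}$ for all $|J|\le2$. By the substitution $z_j=1-2x_j$, this says that the binary moments of degree at most two agree: the averages of $1$, of each $x_i$, and of each $x_ix_j$ with $i<j$ are equal under $p_0$ and $p_1$. On $\{0,1\}^n$ we have $x_i^2=x_i$, so $(c+w\cdot x)^2$ expands into a linear combination of exactly these moments. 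Hence Eq.~\eqref{eq:affine-square-moments} holds, and the lemma gives equal affine spans. Let $r$ be their common dimension. Then $r\le n$ is clear. For $r\ge1$: if $r=0$, both supports would be the same single point, contradicting the disjointness of the supports noted after Eq.~\eqref{eq:phase-sector-congruence}.

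Next I normalize, as in the proof of Theorem~\ref{thm:colored-finite-lift}. Choose $x_0\in T_0$ and conjugate by $X^{x_0}$, so that $0^n\in T_0$ and $\phi_0=0$; the first moment stays common to both supports. Now $\aff(T_0)=\spanR T_0=:W$ is a linear subspace of dimension $r$. Pick $x^{(1)},\dots,x^{(r)}\in T_0$ forming a basis of $W$, and let $B\in\{0,1\}^{r\times n}$ be the matrix with these rows. Since $B$ has rank $r$, some $r$ columns $J$ give an invertible minor $C=B_{\cdot,J}$. Write $\theta\cdot x^{(i)}=h_i\in\Z$ using real lifts, which is possible because $\phi_0=0$. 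Define $\theta'$ to vanish outside $J$ and to solve $C\theta'_J=h$. By Lemma~\ref{lem:denominator-clearing}, $d_r(C)\theta'\in\Z^n$.

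The key point is that this system, with only $r$ unknowns and no auxiliary phase variable, already suffices. Set $\Delta=\theta'-\theta$. By construction $\Delta\cdot x^{(i)}=0$ for every $i$, so $\Delta$ vanishes on $W$. Every $x\in T_0\cup T_1$ lies in $W=\aff(T_0)=\aff(T_1)$, so $\Delta\cdot x=0$ on both supports. Hence $D(\theta')$ and $D(\theta)$ agree exactly on each eigenvector. This is a special case of Lemma~\ref{lem:common-mean-rational-replacement} with scalar $1$, and here it does not even need the common mean. Consequently $D(\theta')$ preserves the normalized code and induces the same logical action.

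The remaining steps repeat the end of the proof of Theorem~\ref{thm:colored-finite-lift}. Let $q$ be the least denominator of $\theta'$, so that $q\mid d_r(C)$. Setting $a=q\theta'\bmod q$ gives a reduced $D(a;q)$ of exact order $q$, with $m\mid q$ and $q\ge m\ge2$. Undoing the $X$-relabeling only flips signs and adds a scalar, which preserves both $q$ and reducedness. Finally, $d_r(C)\le\beta_r\le\beta_n$ because $\beta_\nu$ is monotone in $\nu$. I expect the main obstacle to be the first step: checking carefully that the degree-$\le2$ diagonal KL conditions really yield Eq.~\eqref{eq:affine-square-moments} for every affine $f$ including the constant term, and that $p_k$ is strictly positive exactly on $T_k$. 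Everything after that mirrors earlier arguments.
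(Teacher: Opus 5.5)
Your proposal is correct and follows essentially the same route as the paper: it gets equal affine spans from the degree-$\le 2$ moment identity via Lemma~\ref{lem:positive-affine-moments}, normalizes by $X^{x_0}$ so that the common span is a linear subspace, and then uses an $r\times r$ invertible binary minor of a basis drawn from $T_0$, whose factor $d_r(C)$ clears the denominators of a $\theta'$ that agrees with $\theta$ on that whole subspace and hence on $T_0\cup T_1$. The one step you leave implicit is that $\aff(T_0)=\aff(T_1)$ survives the normalization. This holds because $x\mapsto x\oplus x_0$ is the restriction of an affine involution of $\R^n$, as the paper notes, or alternatively because the conjugated code still has distance at least three, so the first step can be reapplied.
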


\begin{proof}
For $k=0,1$, set
\[
 p_k(x):=(\Pi_k)_{xx},\qquad x\in T_k.
\]
Because $\Pi_k$ has rank one, $p_k$ is a strictly positive probability
distribution on $T_k$.  The diagonal Knill--Laflamme equations for errors
of weight at most two give equality of the first and second moments of
$p_0$ and $p_1$.  Expanding $(c+w\cdot x)^2$, and using $x_j^2=x_j$ for
$x\in\{0,1\}^n$, therefore gives Eq.~\eqref{eq:affine-square-moments} for
every $c\in\R$ and $w\in\R^n$.  Lemma~\ref{lem:positive-affine-moments}
now gives Eq.~\eqref{eq:same-affine-span}.

Choose $x_0\in T_0$.  The local bit flip $X^{x_0}$ acts on bit strings as
the restriction of the affine involution $F_{x_0}:\R^n\to\R^n$ defined by
\[
 (F_{x_0}(x))_j=
 \begin{cases}
  x_j,&(x_0)_j=0,\\
  1-x_j,&(x_0)_j=1.
 \end{cases}
\]
Thus $F_{x_0}(x)=x\oplus x_0$ on $\{0,1\}^n$.  Its linear part is the
invertible diagonal map with entries $(-1)^{(x_0)_j}$, so $F_{x_0}$ sends
affine spans to affine spans and preserves their dimensions.  Conjugate
the code and gate by $X^{x_0}$, discard the resulting common scalar, and
rename the transformed data.  Then $0^n\in T_0$, the eigenvalue on $E_0$
is $1$, and the common affine span is a linear subspace $V$ of dimension
$r$.

The supports $T_0$ and $T_1$ are nonempty and disjoint: their phase-sector
congruences correspond to the two distinct logical eigenvalues.  If $r=0$,
then their common affine span would be a single point, forcing both nonempty
supports to equal that point.  This contradicts disjointness.  Hence
$1\le r\le n$.

Since $0^n\in T_0$ and $\aff(T_0)=V$, the set $T_0$ spans $V$ linearly.
Choose $v_1,\ldots,v_r\in T_0$ forming a basis of $V$, and let
$B\in\{0,1\}^{r\times n}$ be the matrix with these vectors as rows.  Its
row rank is $r$ over $\R$, and hence also over $\Q$ because $B$ has rational
entries.  Select coordinates $I\subseteq\{1,\ldots,n\}$, $|I|=r$, such
that the coordinate minor
\[
 C:=B_{[:,I]}\in\{0,1\}^{r\times r}
\]
has nonzero determinant.  This is nonsingularity over $\Q$ and $\R$; the
matrix need not be nonsingular modulo two.

For each $v_i\in T_0$, the normalized gate satisfies
$D(\theta)\ket{v_i}=\ket{v_i}$, so
$h:=B\theta\in\Z^r$.  Define $\theta'\in\Q^n$ by
\begin{equation}
 \theta'_I=C^{-1}h,
 \qquad
 \theta'_j=0\quad(j\notin I).
 \label{eq:C-theta-prime}
\end{equation}
Then $B\theta'=h=B\theta$.  Therefore $\theta'-\theta$ annihilates every
row of $B$, hence all of $V$.  Since both $T_0$ and $T_1$ lie in $V$,
\[
 (\theta'-\theta)\cdot x=0
 \qquad (x\in T_0\cup T_1).
\]
Thus $D(\theta')$ and $D(\theta)$ agree pointwise on both intrinsic
supports.  Every vector in $E_k$ is supported on $T_k$, so the two gates
have the same restriction on $E_0\oplus E_1=\mathcal Q$.  In particular,
$D(\theta')$ preserves the code and induces exactly the same projective
logical action.

By Lemma~\ref{lem:denominator-clearing},
$d_r(C)\theta'_I\in\Z^r$; the coordinates outside $I$ vanish, so
$d_r(C)\theta'\in\Z^n$.  Let $q$ be the least positive integer such that
$q\theta'\in\Z^n$.  Then $q\mid d_r(C)$, and the minimality of $q$ shows
that reducing $q\theta'$ modulo $q$ gives a phase vector of exact physical
order $q$ with the gcd-one property.  The order $m$ of its projective
restriction divides $q$.  Since this restriction is nontrivial, $m>1$ and
hence $2\le m\le q$; in particular $q\ge2$, so this is a reduced
representative $D(a;q)$.  Consequently,
\[
 m\mid q,
 \qquad
 q\mid d_r(C),
 \qquad
 2\le q\le d_r(C)\le\beta_r\le\beta_n,
\]
which proves Eq.~\eqref{eq:d3-full-bound}.

Undoing the local $X$ normalization applies the inverse affine involution
and changes the physical phase coordinates only by signs, up to a common
scalar.  It therefore preserves the affine dimension, the exact physical
order $q$, and the desired projective logical action.  Finally, the
constructed representative is included in the definition of
$q_{\min}(\overline U)$, so Eq.~\eqref{eq:d3-qmin} follows.
\end{proof}

We can now state the promised finite-enumeration consequence.  Combining
Corollaries~\ref{cor:rank-two-sslp}
and~\ref{cor:d3-physical-order-bound} makes the rank-two SSLP enumeration
finite.  For a fixed length \(n\) and target distance \(d_0\ge2\), it
suffices to consider the reduced data
\begin{equation}
 m\mid q,\qquad m>1,\qquad
 2\le q\le
 \begin{cases}
  \beta_{n+1},&d_0=2,\\
  \beta_n,&d_0\ge3,
 \end{cases}
 \label{eq:K2-finite-cutoff}
\end{equation}
with \(a\) normalized as in Eq.~\eqref{eq:a-canonical}.  The local \(X\)
normalization, coordinate permutations, and unit-power generator
normalization in that corollary preserve the reduced physical order.
For each choice, use the two logical vectors in
Eqs.~\eqref{eq:0L-fiber}--\eqref{eq:1L-fiber}, allowing zero coefficients,
and impose the complete Knill--Laflamme equations for every Pauli operator
of weight less than \(d_0\).  Every code in the stated sector occurs in
this enumeration, and every orthonormal solution defines such a code with
a projective logical element of order \(m\).  At distance at least three,
Corollary~\ref{cor:d3-physical-order-bound}, specifically
Eq.~\eqref{eq:d3-full-bound}, also supplies the sharper restriction
\(q\le\beta_r\) based on the dimension of the actual supports.
There are finitely many data \((q,m,a)\), and the completion conditions
are polynomial equations in the real and imaginary parts of the amplitudes.

The bounds apply to every nontrivial projective diagonal logical element.
For a fixed rank-two code, we can therefore recover the entire projective
diagonal logical group: test all reduced representatives within the
corresponding cutoff in the fixed product basis, retain those preserving the
code, identify equal projective restrictions, and adjoin the identity.

\section{Classification of six-qubit transversal diagonal symmetries}
\label{six:sec:scope}

Section~\ref{sec:sslp-classification} reduces the six-qubit problem to a
finite search over diagonal symmetry data.  For exact \(((6,2,3))\) codes,
every nontrivial projective diagonal logical action admits a physical
representative of order \(2\le q\le9\).  The SSLP search yields \(66\)
candidate classes of symmetry data, of which exactly \(15\) admit completions
satisfying the full Knill--Laflamme equations.  This section summarizes the
resulting classification, weight-enumerator results, and uniqueness of the
order-five code up to local unitaries and qubit permutations.  Throughout,
diagonal actions refer to the fixed product basis.  The companion GitHub
repository~\cite{SSLPClassificationSupplement} contains the Supplementary
Materials with detailed proofs, together with the accompanying code, data,
exact certificates, and Lean proof sources.

\subsection{Classification task and equivalences}

We first specialize the rank-two SSLP normal form to six qubits.
Corollary~\ref{cor:d3-physical-order-bound} supplies a physical
representative of every nontrivial diagonal logical action within the
cutoff
\begin{equation}
  m\mid q,
  \qquad
  2\le q\le\beta_6=9.
  \label{six:eq:qcutoff}
\end{equation}
where \(\beta_6=9\) is given in Eq.~\eqref{eq:small-beta}.
Using the notation of Section~\ref{sec:sslp-preliminaries}, write
\[
  \omega_q=e^{2\pi i/q},
\]
and
\begin{equation}
  D(a;q)
  :=
  \bigotimes_{j=1}^6\diag(1,\omega_q^{a_j}).
  \label{six:eq:Daq}
\end{equation}
We retain reduced representatives, whose exact physical order is \(q\):
\begin{equation}
  \gcd(q,a_1,\ldots,a_6)=1.
  \label{six:eq:primitive}
\end{equation}
For \(x\in\{0,1\}^6\), their subset-sum fibers are
\begin{equation}
  A_a(x):=\sum_{j=1}^6a_jx_j,
  \qquad
  S_b(a;q)
  :=
  \{x\in\{0,1\}^6:A_a(x)\equiv b\pmod q\}.
  \label{six:eq:fibers}
\end{equation}
Using the normalizations of Corollary~\ref{cor:rank-two-sslp}, we write
the logical eigenvectors in the two residue fibers \(S_0(a;q)\) and
\(S_{q/m}(a;q)\):
\begin{align}
  \ket{0_L}
  &=
  \sum_{x\in S_0(a;q)}c_x\ket{x},
  \label{six:eq:sslp0}\\
  \ket{1_L}
  &=
  \sum_{y\in S_{\delta}(a;q)}d_y\ket{y},
  \qquad
  \delta=\frac qm,
  \label{six:eq:sslp1}
\end{align}
with
\begin{equation}
  0\le a_1\le\cdots\le a_6<q.
  \label{six:eq:sorted}
\end{equation}
Zero coefficients are allowed, so the actual supports can be proper subsets
of the full subset-sum solution sets.

\begin{definition}[SSLP candidate classes and realizable classes]
\label{def:six-sslp-orbits}
Fix a physical representative in reduced form, specified by \((q,m,a)\)
and satisfying Eq.~\eqref{six:eq:qcutoff}.  Consider labeled data
\((q,m;S_0,S_{q/m})\) admitting probability distributions with matching
first and second moments.  An \emph{SSLP candidate class} consists of all
such data related by bit flips, coordinate permutations, and interchange
of the two logical phase labels.  Both solution sets are taken in full,
and the labels \((q,m)\) are retained.  A class \emph{satisfies the
support-dimension bound} if these distributions can be chosen so that
their supports have a common affine span of dimension
\(\rho\), with \(q\le\beta_\rho\).  The exact test for this condition is
given under ``Active-rank reduction'' in the Supplementary
Materials~\cite{SSLPClassificationSupplement}.

For a candidate class \(r\), let \(\mathcal K_r\) be the set of pairs of
codewords of the form
Eqs.~\eqref{six:eq:sslp0}--\eqref{six:eq:sslp1} that satisfy the full
weight-at-most-two Knill--Laflamme equations, including all non-diagonal
amplitude and phase constraints.  The class is \emph{realizable} when
\begin{equation}
  \mathcal K_r\ne\varnothing.
  \label{eq:Kr-definition}
\end{equation}
Thus the subset-sum and moment stages produce candidates, and the full
Knill--Laflamme equations decide realizability.
\end{definition}

The cutoff and support-dimension restriction retain, for every exact
\(((6,2,3))\) code and every nontrivial diagonal logical action on it, at
least one physical representative of that action. The \(66\) classes
enumerate the resulting filtered symmetry data, not all reduced physical
representatives.

We distinguish four objects: the full diagonal logical group, realizable
symmetry data, attainable weight enumerators, and code projectors up to
local unitaries.

\begin{definition}[Four classification levels]
\label{def:six-classification-levels}
For a code reconstructed from a chosen representative \((q,m,a)\) within
Eq.~\eqref{six:eq:qcutoff}, the four classification levels are as follows.
\begin{enumerate}[label=\textnormal{(\roman*)},leftmargin=2.4em]
\item The \emph{full logical-group level} concerns the entire cyclic group
      \(\mathcal G_{\mathrm{diag}}(P)\) of the reconstructed code in the
      chosen product basis.  The label \(m\) in the search data is the
      order of a chosen element; it need not equal the order of the full
      group.
\item The \emph{symmetry-data level} concerns the realizable
      classes \(r\) of
      Definition~\ref{def:six-sslp-orbits}, for which the full
      Knill--Laflamme completion set \(\mathcal K_r\) is nonempty.
\item For a realizable class \(r\), its
      \emph{weight-enumerator set} consists of all attainable vectors:
      \begin{equation}
        \mathcal W_r
        :=
        \{(A_0,\ldots,A_6):
        \text{an exact }((6,2,3))\text{ completion exists in }r\}.
        \label{eq:Wr-definition}
      \end{equation}
      The \(A_w\) are defined in Eq.~\eqref{eq:A-enumerator}; the associated
      \(B_w\) follow from the quantum MacWilliams transform.
\item At the \emph{local-unitary code level}, two projectors are identified
      under the local-unitary-permutation relation
      \eqref{eq:LU-equivalence}.  We also use the local-monomial relation
      defined there, which allows only bit flips, diagonal phase gates,
      and qubit permutations.
\end{enumerate}
\end{definition}

The same code can occur in several symmetry classes, either through
different diagonal actions or through different full fibers that contain
its actual support.  One realizable class can also contain a continuum
of enumerators, and different classes can have overlapping enumerator
sets.  The enumerators are invariant under local unitaries, but equality
of enumerators does not in general imply local-unitary equivalence.

As noted in Section~\ref{sec:qec-preliminaries}, the quantum Singleton
bound makes every six-qubit rank-two solution of the full
weight-at-most-two Knill--Laflamme equations an exact distance-three code.

\subsection{Classification results}
\label{six:sec:global-status}

The finite search now follows the three SSLP layers.  Subset-sum
enumeration, matching diagonal moments, and the support-dimension bound
leave \(66\) candidate classes after equivalent data are identified.
The full Knill--Laflamme equations exclude \(51\) classes, while exact
constructions realize the remaining \(15\).  The supplementary theorem
``The \(66\) SSLP moment orbits'' establishes the enumeration, and the
section ``Reproducible exact classification protocol'' gives the procedure.
The Supplementary Materials
call the candidates ``SSLP moment orbits'' and the realizable classes
``SSLP realization orbits.''  We use the orbit numbering in the
supplementary table listing all \(66\) rank-compatible moment-orbit
representatives.

We first determine the possible full diagonal logical groups and the
classes that realize them.  We then give thirteen exact enumerator sets
and the order-five uniqueness result.  Detailed proofs of the
classification results are collected under ``Proofs of the main-text
classification results'' in the Supplementary Materials.

\subsubsection{Result I: orders of the full diagonal logical groups}

Let \(M_3(6)\) denote the maximum projective logical order of a transversal
diagonal unitary preserving an exact \(((6,2,3))\) code, where the
maximum is taken over all such codes and transversal diagonal unitaries
in the chosen product basis.

\begin{theorem}[Six-qubit projective diagonal groups]
\label{six:thm:global-status}
For an exact \(((6,2,3))\) code whose projective diagonal logical group in
the chosen product basis is nontrivial,
\begin{equation}
  \boxed{
  \mathcal G_{\mathrm{diag}}(P)
  \cong C_m,
  \qquad
  m\in\{2,3,5\}.}
  \label{eq:six-intrinsic-groups}
\end{equation}
Equivalently, the possible projective element orders are
\(\{1,2,3,5\}\) including the identity, and
\begin{equation}
  \boxed{M_3(6)=5.}
  \label{eq:M36}
\end{equation}
\end{theorem}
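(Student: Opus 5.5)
The plan is to combine the structural results of Section~\ref{sec:sslp-classification} with the finite SSLP search and its Knill--Laflamme completion stage. By Corollary~\ref{cor:rank-two-sslp}, since an exact \(((6,2,3))\) code detects every \(Z_j\), \(\mathcal G_{\mathrm{diag}}(P)\) is cyclic, say \(C_M\). Every element of \(C_M\), and in particular a generator of order \(M\), is then a nontrivial projective diagonal action. By Corollary~\ref{cor:d3-physical-order-bound}, each such element has a reduced representative \(D(a;q)\) with \(m\mid q\) and \(q\le\beta_r\le\beta_6=9\), where \(r\) is the affine dimension of the intrinsic supports. After the normalizations of Corollary~\ref{cor:rank-two-sslp}, the code lies in one of the SSLP candidate classes of Definition~\ref{def:six-sslp-orbits} with label \(m\), and this class satisfies the support-dimension bound. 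Hence the order of every element, including the generator, appears among the labels \(m\) of realizable classes.

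Next we run the enumeration over \(2\le q\le 9\), \(m\mid q\), \(m>1\), and sorted reduced \(a\). We form the fibers \(S_0,S_{q/m}\) and test first- and second-moment LP feasibility together with the support-dimension filter. We then reduce modulo bit flips, permutations, and label swap, which yields the \(66\) classes. For each class we must decide whether \(\mathcal K_r\ne\varnothing\). Exclusion uses exact no-go arguments on the weight-\(\le2\) KL system: the off-diagonal \(X\)/\(Y\) constraints between Hamming-adjacent strings, combined with moment equations, lead to contradictions or force supports whose moments fail. Realizability is shown by explicit exact amplitudes. The result is that exactly \(15\) classes are realizable, and all of them carry labels \(m\in\{2,3,5\}\). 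Candidates with \(m=4,6,7,8,9\) must all be excluded.

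From this, every element order \(M'\) of \(\mathcal G_{\mathrm{diag}}(P)\) lies in \(\{1,2,3,5\}\). Since the group is cyclic of order \(M\), it contains an element of order \(M\), so \(M\in\{2,3,5\}\). For example, \(M=6\) or \(M=10\) would produce an element of order \(6\) or \(10\), neither of which appears. To show each of \(2,3,5\) is attained as the full group order, not merely as an element order, we take an explicit realization in a class with label \(m\). By the same argument its full group is \(C_M\) with \(m\mid M\) and \(M\in\{2,3,5\}\), and \(M\) is prime, so \(M=m\). This gives Eq.~\eqref{eq:six-intrinsic-groups}, and \(M_3(6)=5\) follows.

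The main obstacle is the KL stage: excluding the \(51\) non-realizable classes rigorously, especially those with \(m=4\), \(6\), or \(8\). These admit continuous amplitude and phase freedom, so the no-go arguments need polynomial elimination or Gröbner-type certificates rather than LP infeasibility. I would first exploit the sparse Hamming-distance-one graph between the fibers to decouple the \(X_j\) and \(X_iX_j\) constraints. I would fall back on exact algebraic certificates, verified in Lean as described in the introduction, for the stubborn classes.
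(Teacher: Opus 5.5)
Your proposal is correct and follows essentially the same route as the paper. It gets cyclicity from the rank-two normal form, then uses the cutoff $q\le\beta_6=9$ with the support-dimension filter to place every nontrivial element in one of the $66$ candidate classes, and relies on the exact Knill--Laflamme exclusions and constructions (deferred, as in the paper, to the Supplementary Materials) to leave only element orders $2,3,5$; your primality argument that each of $C_2,C_3,C_5$ actually occurs as the full group spells out the paper's remark that the element orders determine the possible full groups.
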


The rank-two normal form makes the diagonal logical group cyclic.
Exact Knill--Laflamme exclusions and constructions leave precisely the
nontrivial element orders \(2,3,5\), which therefore also determine the
possible full groups.  See the proof of the six-qubit projective diagonal
group classification in the Supplementary Materials.

\subsubsection{Result II: realizable symmetry classes}

\begin{theorem}[The \(15\) realizable SSLP candidate classes]
\label{thm:six-15-sslp-realization-orbits}
Among the \(66\) SSLP candidate classes satisfying the support-dimension
bound, the full
weight-at-most-two Knill--Laflamme system admits an exact completion if and
only if
\begin{equation}
  \boxed{
  r\in
  \{1,2,3,4,7,8,9,10,11,16,21,22,23,36,45\}.}
  \label{six:eq:15-sslp-realization-orbits}
\end{equation}
There are exactly \(15\) realizable classes, distributed by the order
of the chosen logical action as follows:
\[
\begin{array}{c|rrr}
 m&2&3&5\\ \hline
 \#\text{ realizable classes}&11&3&1.
\end{array}
\]
Each listed class contains an exact \(((6,2,3))\) code.
\end{theorem}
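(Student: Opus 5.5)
The proof has two directions, taken class by class over the \(66\) candidate classes of Definition~\ref{def:six-sslp-orbits}. By Corollary~\ref{cor:rank-two-sslp} and the cutoff \eqref{six:eq:qcutoff}, a class is realizable exactly when the polynomial system for the amplitudes \((c_x)_{x\in S_0}\) and \((d_y)_{y\in S_{q/m}}\) has a solution. That system consists of orthonormality together with all weight-at-most-two Knill--Laflamme equations \eqref{eq:logical-basis-KL}. So the plan is to prove realizability for each of the \(15\) listed classes and infeasibility for each of the other \(51\). The tally by \(m\) then follows by reading off the retained label \(m\) of each listed representative.

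\emph{Realizability.} For each listed \(r\) I will give an explicit pair of codewords supported in the two fibers. The coefficients should be algebraic numbers, and in many cases can be chosen from small cyclotomic or quadratic fields. I will then check exactly, in rational or algebraic arithmetic, that \(\bra{i_L}E\ket{j_L}=\lambda_E\delta_{ij}\) holds for all \(E\) of weight at most two. The diagonal \(Z\)-type equations reduce to the LP moment data, so the search can start from a feasible LP point. The \(X\)- and \(Y\)-type equations only involve pairs of strings at Hamming distance at most two, and this sparsity cuts down the checking. For the unique \(m=5\) class I will simply use the code of Ref.~\cite{ZhangWuHuangZeng2025}. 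For the \(m=2,3\) classes I will look for seeds among known \(((6,2,3))\) codes, including the continuous families of Ref.~\cite{DuZhangPoonZeng2026}, suitably diagonal-conjugated, and otherwise solve numerically and then rationalize the solution.

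\emph{Exclusion.} For each of the \(51\) remaining classes I will derive a contradiction from the full KL system. The first tool is to exploit the support structure. If some \(X_j\) or \(X_iX_j\) maps a string of \(S_0\) to a unique string of \(S_{q/m}\), then the corresponding off-diagonal equation forces a product \(c_xd_y=0\). Iterating these zero-forcings shrinks the supports. Next I will re-run the moment LP on the reduced supports, and re-apply the affine-span identity of Lemma~\ref{lem:positive-affine-moments} together with the active-rank bound \(q\le\beta_\rho\). Together these often yield infeasibility. When they do not, I will use the equal-diagonal conditions for \(X_iX_j\) and \(Y_iY_j\), which are quadratic in the amplitudes, and combine them with the moment equalities into sum-of-squares or Gröbner/Positivstellensatz certificates that force a norm to vanish.

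The main obstacle is these residual exclusions, which need genuinely nonlinear reasoning with phases. I would attack them first with the phase-cancellation structure. Within one fiber, the \(X_iX_j\) diagonal elements are sums \(\sum 2\,\mathrm{Re}(\bar c_x c_{x\oplus e_{ij}})\). Bounding these by the Cauchy--Schwarz inequality against the moment-fixed weights \(|c_x|^2\) should show that the two logical states cannot have equal values. Each certificate would be verified exactly, and in Lean as the paper's workflow indicates.
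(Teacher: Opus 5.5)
Your overall architecture matches the paper's proof: explicit exact constructions for the fifteen listed classes, exact Knill--Laflamme no-go certificates for the other fifty-one (with zero-amplitude boundary strata handled explicitly), and the tally read off the retained labels $m$. The problem is one of your exclusion tools.

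The active-rank bound $q\le\beta_\rho$ is not a necessary condition for a completion to exist in a given class. Corollary~\ref{cor:d3-physical-order-bound} says only that the projective action admits \emph{some} representative of order at most $\beta_\rho$, where $\rho$ is the affine dimension of the actual supports. Nothing prevents the fixed representative $D(a;q)$ of class $r$ from also preserving a code whose supports have $\beta_\rho<q$. Indeed, when $\rho<6$ there is a nonzero integral $v$ orthogonal to the direction space of the common affine span. Then $D(v/p)$ acts as a logical scalar, so adding $v/p$ to a representative changes its physical order without changing the logical action, as in Example~\ref{ex:seven-qubit-rationalization}. In the paper, the support-dimension bound is only a completeness filter on classes: it guarantees that every code keeps at least one representative among the $66$. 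The set $\mathcal K_r$ of Definition~\ref{def:six-sslp-orbits} imposes no condition on the actual supports, and the theorem asserts $\mathcal K_r=\varnothing$ for every excluded class.

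Consequently, once your zero-forcings shrink the supports, a completion whose support has small affine span is still an element of $\mathcal K_r$ and would make the class realizable. Discarding it via $q\le\beta_\rho$ proves only that no completion with large support exists. That is weaker than the claim and could misclassify a class. These degenerate strata must be eliminated by the Knill--Laflamme equations themselves, which is what the paper's no-go certificates ``including zero-amplitude boundary cases'' do. Re-running the moment LP on the reduced supports and applying Lemma~\ref{lem:positive-affine-moments} to them are legitimate, because both are necessary for every completion. Keep those, and drop the active-rank bound from the exclusion step.

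Similarly, your Cauchy--Schwarz step cannot treat the weights $|c_x|^2$ as ``moment-fixed''. The moment equations fix only linear functionals of these weights, so any such bound must hold uniformly over the whole LP-feasible set, including its boundary.
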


The proof combines the exhaustive moment enumeration with exact
Knill--Laflamme no-go certificates for the excluded classes, including
zero-amplitude boundary cases.  Exact constructions in every listed
class give the converse.  See the proof of the classification of the
fifteen realizable symmetry classes in the Supplementary Materials.

Table~\ref{tab:six-realizable-classes} lists representatives and summarizes
the weight-enumerator sets.

\subsubsection{Result III: weight-enumerator sets}

A realizable symmetry class can contain codes with continuously varying
weight enumerators.  To describe this variation, define
\begin{align}
  \mathcal L(\alpha)
  &:=(1,\alpha,1-\alpha,0,
      11+4\alpha,16-\alpha,3-3\alpha),
  \label{eq:Lalpha}\\
  \mathcal M(t)
  &:=(1,0,t,1-t,12-t,15+t,3),
  \label{eq:Mt-main}\\
  \mathcal E_1^A
  &:=(1,1,0,0,15,15,0)=\mathcal L(1),
  \label{eq:E1A-main}\\
  \mathcal E_2^A
  &:=(1,0,1,0,11,16,3)=\mathcal L(0)=\mathcal M(1),
  \label{eq:E2A-main}\\
  \mathcal R_5^A
  &:=(1,0,21/25,0,291/25,384/25,79/25).
  \label{eq:R5A-main}
\end{align}

\begin{table}[!t]
\centering
\small
\setlength{\tabcolsep}{4pt}
\renewcommand{\arraystretch}{1.12}
\begin{tabular}{@{}rccll@{}}
\toprule
Class \(r\) & \(q\) & \(m\) & Representative \(a\)
  & Enumerator set \(\mathcal W_r\)\\
\midrule
1  & 2 & 2 & \((0,0,0,1,1,1)\) & Open\\
2  & 2 & 2 & \((0,0,1,1,1,1)\)
   & \(\{\mathcal L(\alpha):0\le\alpha\le1\}\)\\
3  & 2 & 2 & \((0,1,1,1,1,1)\)
   & \(\{\mathcal L(\alpha):0\le\alpha\le1\}\)\\
4  & 2 & 2 & \((1,1,1,1,1,1)\) & Open\\
\addlinespace
7  & 3 & 3 & \((0,1,1,1,2,2)\)
   & \(\{\mathcal L(\alpha):0\le\alpha\le1\}\)\\
8  & 3 & 3 & \((1,1,1,1,1,1)\)
   & \(\{\mathcal L(\alpha):\tfrac12\le\alpha\le1\}\)\\
9  & 3 & 3 & \((1,1,1,1,1,2)\) & \(\{\mathcal E_2^A\}\)\\
\addlinespace
10 & 4 & 2 & \((0,0,1,1,2,2)\) & \(\{\mathcal E_2^A\}\)\\
11 & 4 & 2 & \((0,0,1,2,2,2)\) & \(\{\mathcal E_1^A\}\)\\
16 & 4 & 2 & \((0,1,2,2,2,3)\) & \(\{\mathcal E_2^A\}\)\\
21 & 4 & 2 & \((1,1,1,2,2,3)\) & \(\{\mathcal E_2^A\}\)\\
22 & 4 & 2 & \((1,1,2,2,2,2)\) & \(\{\mathcal E_2^A\}\)\\
23 & 4 & 2 & \((1,2,2,2,2,2)\) & \(\{\mathcal E_1^A\}\)\\
\addlinespace
36 & 5 & 5 & \((1,1,1,2,2,4)\) & \(\{\mathcal R_5^A\}\)\\
45 & 6 & 2 & \((1,1,2,3,4,4)\)
   & \(\{\mathcal M(t):\tfrac23\le t\le1\}\)\\
\bottomrule
\end{tabular}
\caption{The \(15\) realizable classes.  Here \(q\) is the
physical order of the displayed representative, not necessarily the
minimum order of an implementation of the same logical action, and
\(m\) is the targeted projective logical
order.  Every row admits exact codes.  The two open enumerator sets
contain the exact families in
Eqs.~\eqref{eq:remaining-open-L-inclusion}--\eqref{eq:remaining-open-M-inclusion}.
Class~36 contains one code up to local unitaries and qubit permutations;
complete LU classification remains open for the other classes.}
\label{tab:six-realizable-classes}
\end{table}

\begin{theorem}[Exact weight-enumerator sets for thirteen classes]
\label{thm:six-WE-status}
For thirteen of the fifteen realizable classes in
Theorem~\ref{thm:six-15-sslp-realization-orbits}, the exact weight-enumerator
sets are as follows.
\begin{enumerate}[label=\textnormal{(\roman*)},leftmargin=2.4em]
\item Class~36 has
      \begin{equation}
        \mathcal W_{36}=\{\mathcal R_5^A\}.
        \label{eq:W36-main}
      \end{equation}
\item For projective logical order three,
      \begin{align}
        \mathcal W_7
        &=\{\mathcal L(\alpha):0\le\alpha\le1\},
        \label{eq:W7-main}\\
        \mathcal W_8
        &=\{\mathcal L(\alpha):\tfrac12\le\alpha\le1\},
        \label{eq:W8-main}\\
        \mathcal W_9
        &=\{\mathcal E_2^A\}.
        \label{eq:W9-main}
      \end{align}
\item For projective logical order two,
      \begin{align}
        \mathcal W_2=\mathcal W_3
        &=\{\mathcal L(\alpha):0\le\alpha\le1\},
        \label{eq:W23-main}\\
        \mathcal W_{10}=\mathcal W_{16}=\mathcal W_{21}=\mathcal W_{22}
        &=\{\mathcal E_2^A\},
        \label{eq:W-E2-main}\\
        \mathcal W_{11}=\mathcal W_{23}
        &=\{\mathcal E_1^A\},
        \label{eq:W-E1-main}\\
        \mathcal W_{45}
        &=\{\mathcal M(t):\tfrac23\le t\le1\}.
        \label{eq:W45-main}
      \end{align}
\end{enumerate}
The two sets for classes~1 and~4 remain unresolved.  The union of the
thirteen determined sets is
\begin{equation}
  \boxed{
  \bigcup_{r\in\{2,3,7,8,9,10,11,16,21,22,23,36,45\}}\mathcal W_r
  =
  \{\mathcal L(\alpha):0\le\alpha\le1\}
  \cup
  \{\mathcal M(t):\tfrac23\le t\le1\}
  \cup
  \{\mathcal R_5^A\}.}
  \label{eq:proved-WE-union}
\end{equation}
\end{theorem}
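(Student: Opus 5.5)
The plan is to prove each determined enumerator set by two inclusions: a parametrized exact construction gives the lower inclusion, and a reduction of the full Knill--Laflamme system to a low-dimensional semialgebraic problem gives the upper one. Throughout, we work inside the normal form of Corollary~\ref{cor:rank-two-sslp}. For each class \(r\) in the list, fix the representative \((q,m,a)\) of Table~\ref{tab:six-realizable-classes}. Write the two codewords as in Eqs.~\eqref{six:eq:sslp0}--\eqref{six:eq:sslp1}, with unknown complex amplitudes on the full fibers \(S_0(a;q)\) and \(S_{q/m}(a;q)\), and allow zero amplitudes.

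Since the code has distance three, Eq.~\eqref{eq:enumerator-distance} gives \(A_w=B_w\) for \(w\le2\). The MacWilliams identity \eqref{eq:quantum-MacWilliams} then determines \((A_3,\ldots,A_6)\) linearly from \((A_0,A_1,A_2)\), with \(A_0=1\). The families \(\mathcal L(\alpha)\) and \(\mathcal M(t)\) should be exactly these one-parameter solution lines, so I would first verify that each vector is fixed by \((A_1,A_2)\), together with the constraint \(A_1+A_2=1\) or \(A_1=0\) as appropriate. This reduces each claimed set to the range of one real parameter, say \(\alpha=A_1\) or \(t=A_2\). Because each codeword is supported on a single \(Z\)-type eigen-fiber, \(\tr(EP)\) for low-weight Paulis simplifies. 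Only \(Z\)-type and a few \(X\)/\(Y\)-type terms connecting strings within a fiber at Hamming distance \(\le2\) contribute. Hence \(A_1\) and \(A_2\) become explicit quadratic forms in the amplitudes.

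For the lower inclusion in each class, I would exhibit an explicit continuous family of completions. Starting from the constructions that prove realizability in Theorem~\ref{thm:six-15-sslp-realization-orbits}, I would find a path of exact KL solutions along which the enumerator parameter sweeps the claimed interval. Continuity of \(A_w\) along the path and the intermediate value theorem give every intermediate value; the endpoints \(\mathcal E_1^A\) and \(\mathcal E_2^A\) are attained by the stabilizer-like endpoints. The singleton classes need only one exact code each.

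For the upper inclusion, I would solve the KL system in each class with the symmetry normalizations (bit flips and permutations fixing the fibers, and diagonal phases), using the moment LP equalities to pin down the squared amplitudes as affine functions of few parameters. The off-diagonal \(X\)/\(Y\) conditions then cut these down further. Evaluating \(A_1\) (or \(A_2\)) on the resulting variety, I would bound it by positivity of the probabilities. This yields, for example, \(\alpha\ge\tfrac12\) in class~8 and \(t\ge\tfrac23\) in class~45, while in the rigid classes the parameter is forced to a single value. For class~36, rigidity should follow from the uniqueness argument, giving \(\mathcal R_5^A\). The union \eqref{eq:proved-WE-union} then follows by collecting the sets. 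The main obstacle is the upper inclusion for the continuous classes~2, 3, 7, 8 and~45, where the coupled polynomial phase constraints must be eliminated exactly. There I would first try a Gröbner/resultant elimination down to a single inequality in the enumerator parameter, certified by an explicit sum-of-squares or positivity identity.
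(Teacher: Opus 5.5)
Your two-inclusion structure matches the paper's: a converse bounding every exact completion, plus exact families attaining each stated enumerator. However, the step that reduces your converse to a single parameter is false. For $n=6$ and $K=2$, Eq.~\eqref{eq:quantum-MacWilliams} together with Eq.~\eqref{eq:enumerator-distance} gives only three linear equations on $(A_1,\ldots,A_6)$: $\sum_wA_w=32$ (from $B_0=A_0=1$), $B_1=A_1$, and $B_2=A_2$. The compatible vectors therefore form a three-dimensional affine space coordinatized by $(A_1,A_2,A_3)$; the remaining $3\times3$ system for $(A_4,A_5,A_6)$ has determinant $-64$. The theorem's own data refute your claim. $\mathcal R_5^A\in\mathcal W_{36}$ and $\mathcal M(21/25)\in\mathcal W_{45}$ share $(A_0,A_1,A_2)=(1,0,21/25)$, yet $A_3=0$ for the first and $A_3=4/25$ for the second. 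Each line in the statement is therefore cut out by two linear conditions, not one: the $\mathcal L$-line by $A_3=0$ and $A_1+A_2=1$, and the $\mathcal M$-line by $A_1=0$ and $A_2+A_3=1$. Bounding $\alpha=A_1$ or $t=A_2$ on the Knill--Laflamme variety, as you propose, cannot by itself give $\mathcal W_r\subseteq\{\mathcal L(\alpha)\}$ or $\mathcal W_r\subseteq\{\mathcal M(t)\}$.

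The missing ingredient is a proof that every exact completion in the class satisfies both identities, and the conditions involving $A_3$ are the substantive ones. Weight-three Paulis are unconstrained by the Knill--Laflamme equations. So for $\wt(E)=3$ the trace $\tr(EP)=\bra{0_L}E\ket{0_L}+\bra{1_L}E\ket{1_L}$ involves amplitude pairs at Hamming distance three inside each fiber. You must show that these traces behave as required on the entire completion set, including the zero-amplitude and singular strata. This is exactly the step the paper singles out for classes~2 and~3, where $A_3=0$ is forced for every realization before the exact families fill the $\mathcal L$-interval. The corresponding identities for classes~7, 8, 45 and the singleton classes need the same treatment.

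The same issue affects your lower inclusion. An intermediate-value argument on $A_1$ or $A_2$ alone does not place the constructed codes on the claimed line, so you need the full enumerator along each family. A smaller correction: for $w\le2$ one has $A_w=\sum_{\wt(E)=w}\abs{\lambda_E}^2$ with $\lambda_E=\bra{0_L}E\ket{0_L}$ already quadratic in the amplitudes. Thus $A_1$ and $A_2$ are quartic, not quadratic, in the amplitudes.
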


Each set is determined by a converse that bounds every exact completion
and a construction that attains every stated enumerator.  See
the proof of the weight-enumerator classification in the Supplementary
Materials.  For classes~2 and~3, the supplementary theorems ``Complete
orbit-2 enumerator image'' and ``Global orbit-3 enumerator theorem'' force
\(A_3=0\) for every realization,
including zero-amplitude and singular strata; the exact families then
attain the full \(\mathcal L\)-interval.

Classes~2, 3, and~7 share the full \(\mathcal L\)-interval, while class~8
realizes a proper subinterval.  Several other classes have the same
singleton enumerator set \(\{\mathcal E_2^A\}\).  These examples show how distinct
symmetry data can lead to the same enumerator set, even as one class
contains a continuous interval.

Classes~1 and~4 each realize the full \(\mathcal L\)-interval and the
\(\mathcal M\)-interval with \(2/3\le t\le1\).  The remaining question
is whether they realize any additional enumerators.  The constructions in
the supplementary theorem ``Exact family inclusions for orbits 1--4'' give

\begin{equation}
  \{\mathcal L(\alpha):0\le\alpha\le1\}
  \subseteq \mathcal W_r,
  \qquad r=1,4,
  \label{eq:remaining-open-L-inclusion}
\end{equation}
and
\begin{equation}
  \{\mathcal M(t):\tfrac23\le t\le1\}
  \subseteq \mathcal W_1\cap\mathcal W_4.
  \label{eq:remaining-open-M-inclusion}
\end{equation}
The numerical evidence is consistent with the sharper equality   
\begin{equation}
  \mathcal W_1\overset{?}{=}\mathcal W_4
  \overset{?}{=}
  \{\mathcal L(\alpha):0\le\alpha\le1\}
  \cup
  \{\mathcal M(t):\tfrac23\le t\le1\}.
  \label{eq:W14-conjecture}
\end{equation}
Numerical searches have found only these families.  An exact converse
excluding additional enumerators is still needed for each class; the
partial reductions are described under ``Unresolved weight-enumerator
images for orbits~1 and~4'' in the Supplementary
Materials.

We next compare these sets with the unrestricted code space.
Let
\begin{equation}
  \mathfrak W_{6,2,3}^A
  :=
  \left\{
    (A_0(P),\ldots,A_6(P)):
    P\text{ is the projector of an exact }((6,2,3))\text{ code}
  \right\}
  \label{eq:global-W-set}
\end{equation}
denote the \(A\)-enumerator set of the full \(((6,2,3))\) code space, with
no transversal-gate assumption.

The numerical search of Ref.~\cite{ZhangWuHuangZeng2025} imposed no
transversal-symmetry restriction and found two one-parameter families
and one isolated point.
In our parameterization, these observations motivate the conjecture
\begin{equation}
  \boxed{
  \mathfrak W_{6,2,3}^A
  \overset{?}{=}
  \{\mathcal L(\alpha):0\le\alpha\le1\}
  \ \cup\
  \{\mathcal M(t):\tfrac35\le t\le1\}
  \ \cup\
  \{\mathcal R_5^A\}.}
  \label{eq:global-WE-conjecture}
\end{equation}
The thirteen determined sets already realize the full
\(\mathcal L\)-interval, the segment \(2/3\le t\le1\) of the
\(\mathcal M\)-interval, and the isolated \(\mathcal R_5^A\) point.
Determining \(\mathcal W_1\) and \(\mathcal W_4\) would complete the
enumerator description of the nontrivial diagonal sector.  The global
conjecture also includes codes with trivial projective diagonal logical
group and proposes the wider range \(3/5\le t\le1\); those codes
require a separate argument.

\subsubsection{Result IV: the unique order-five code and its minimum length}

The order-five sector contains a single exact code up to local unitaries
and qubit permutations.  We display its codewords and transversal action
before stating the uniqueness and minimum-length results.  Split the
six physical qubits into the first four and the last two, and let
\(\omega=e^{2\pi i/3}\).  Define the unnormalized four-qubit vectors
\begin{align}
  \ket{S_j}
  &:=\sum_{\substack{x\in\{0,1\}^4\\\wt(x)=j}}\ket{x},
  \qquad j=1,3,
  \label{eq:six-C5-shells}\\
  \ket E
  &:=\ket{1100}+\ket{0011}
    +\omega(\ket{1010}+\ket{0101})
    +\omega^2(\ket{1001}+\ket{0110}),
  \label{eq:six-C5-edges}
\end{align}
and let \(\ket{E^*}\) be the coefficientwise complex conjugate of
\(\ket E\).  An orthonormal basis of the code is
\begin{align}
  \ket{\psi_0}
  &=\frac{\ket{0000}(\ket{00}+\ket{11})
           +\ket{S_3}\ket{10}}{\sqrt{10}}
    +\frac{\ket E\ket{01}}{\sqrt{15}},
  \label{eq:six-C5-code0}\\
  \ket{\psi_1}
  &=\frac{\ket{1111}(\ket{00}+\ket{11})
           -\ket{S_1}\ket{01}}{\sqrt{10}}
    -\frac{\ket{E^*}\ket{10}}{\sqrt{15}}.
  \label{eq:six-C5-code1}
\end{align}
This is a presentation of the code of
Ref.~\cite{ZhangWuHuangZeng2025}, locally equivalent to the class-36
representative in Table~\ref{tab:six-realizable-classes}.  In the displayed
basis, put
\begin{equation}
  a^\star=(1,1,1,1,2,3),\qquad \zeta=e^{2\pi i/5}.
  \label{eq:six-C5-gate-data}
\end{equation}
The two logical supports lie in \(S_0(a^\star;5)\) and
\(S_4(a^\star;5)\), respectively, so
\begin{equation}
  D(a^\star;5)\big|_{\mathcal Q}
  =\diag(1,\zeta^4),\qquad
  D(a^\star;5)^4\big|_{\mathcal Q}=\diag(1,\zeta).
  \label{eq:six-C5-logical-action}
\end{equation}
Both physical gates have order five.  The code satisfies the full
weight-at-most-two KL equations and has enumerator \(\mathcal R_5^A\)
in Eq.~\eqref{eq:R5A-main}; the KL verification and enumerator computation
are given under ``Phase rigidity and a canonical representative'' and
``The unique weight enumerator on orbit 36'' in the Supplementary Materials.
In particular, \(A_1=0\).

\begin{theorem}[The unique projective-\(C_5\) code]
\label{thm:unique-C5-LU}
Every exact \(((6,2,3))\) code admitting a projective diagonal logical
element of order five belongs to class~36.  All exact completions of
class~36 are equivalent under physical-qubit permutations, single-qubit
bit flips, and diagonal phase gates.  Consequently the projective-\(C_5\)
sector consists of
one local-unitary code class (with physical-qubit permutations allowed).
\end{theorem}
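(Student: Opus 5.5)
The argument has three stages: reduce to finitely many symmetry data, identify class~36 as the only order-five survivor, and then rigidify its completions.

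\emph{Reduction to class~36.} Start from an exact $((6,2,3))$ code with a projective diagonal logical element $\overline U$ of order five. By Corollary~\ref{cor:rank-two-sslp} and Corollary~\ref{cor:d3-physical-order-bound}, $\overline U$ has a reduced representative $D(a;q)$ with $5\mid q$ and $q\le\beta_6=9$, so $q=5$. After local $X$ conjugation, qubit permutation, logical interchange and a unit power, the logical eigenvectors lie in $S_0(a;5)$ and $S_1(a;5)$, with $a$ sorted. The affine-span condition of Corollary~\ref{cor:d3-physical-order-bound} and the moment LP then restrict $a$ to the $m=5$ entries among the $66$ candidate classes. Theorem~\ref{thm:six-15-sslp-realization-orbits} leaves exactly one realizable order-five class, namely class~36. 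Hence every such code, after these normalizations, is an exact completion in $\mathcal K_{36}$.

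\emph{Rigidity within class~36.} Next I parametrize the completions of the representative $a=(1,1,1,2,2,4)$, $q=5$. I list the fibers $S_0$ and $S_1$ explicitly and write $c_x=\sqrt{p_0(x)}\,e^{i\varphi_x}$ and $d_y=\sqrt{p_1(y)}\,e^{i\psi_y}$. The plan then has two parts.

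First, I show that the diagonal KL equations, together with the off-diagonal ones, pin down the squared amplitudes uniquely. The moment LP for class~36 should have a unique feasible point, or the KL system should cut the feasible polytope to one point. This must be checked in every zero-amplitude stratum, i.e.\ for every possible actual support. I would first compute the LP polytope exactly. If it is not a single point, I would use the weight-two $X_iX_j$ and $X_iZ_j$ equations to eliminate the remaining directions, stratum by stratum.

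Second, I prove phase rigidity. The off-diagonal $X$-, $Y$- and mixed equations are linear in products such as $\bar c_x c_{x\oplus e_i\oplus e_j}$. They should force each phase difference along the Hamming-distance-two graph of the support to equal a fixed value, up to a cocycle of the form $\varphi_x=\sum_j\theta_jx_j+\text{const}$ on each sector. Such a cocycle is exactly a diagonal local phase gate, plus a relative logical phase that is absorbed into the choice of basis. The structure is visible in the canonical form \eqref{eq:six-C5-code0}--\eqref{eq:six-C5-code1}: the cube-root phases $\omega$ on $\ket E$ are the only non-gauge data, and they are forced up to the choice $\omega\leftrightarrow\omega^2$. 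I would remove that choice by a qubit permutation, such as the transposition swapping qubits~3 and~4, which exchanges $\ket{1010}\leftrightarrow\ket{1001}$ and $\ket{0101}\leftrightarrow\ket{0110}$ and hence exchanges the $\omega$ and $\omega^2$ terms of $\ket E$. Together with the bit flips and permutations already used to reach normal form, this gives local-monomial equivalence of all completions to \eqref{eq:six-C5-code0}--\eqref{eq:six-C5-code1}.

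\emph{Conclusion and main obstacle.} Local-monomial equivalence implies local-unitary equivalence, so the projective-$C_5$ sector is a single LU class with permutations allowed. I expect the phase-rigidity step to be the main obstacle. The phase equations are nonlinear on the gauge-reduced torus, and the proof must exclude sporadic solutions, including degenerate supports where some amplitudes vanish. I would attack it by first fixing the gauge on a spanning tree of the distance-two graph. The remaining cycle phases must satisfy a small polynomial system, which I would resolve by exact Gröbner elimination or by hand case analysis. Zero-amplitude strata would be handled separately using the moment constraints.
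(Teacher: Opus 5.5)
Your reduction is correct and matches the paper. $5\mid q\le\beta_6=9$ forces $q=5$, the normalizations of Corollary~\ref{cor:rank-two-sslp} put the eigenvectors in $S_0(a;5)$ and $S_1(a;5)$, and Theorem~\ref{thm:six-15-sslp-realization-orbits} leaves class~36 as the only realizable order-five class.

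The gap is the rigidity step, which is the real content of the theorem's second sentence. You say what the equations ``should'' give rather than deriving it, and both concrete mechanisms you propose fail as stated.

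\emph{The LP is not a point.} Work in the presentation \eqref{eq:six-C5-code0}--\eqref{eq:six-C5-code1}, which is locally equivalent to the class-36 data. There $S_0(a^\star;5)$ and $S_4(a^\star;5)$ each consist of exactly the twelve displayed strings. Let $e_k\in\{0,1\}^4$ be the $k$th unit string. The first- and second-moment equations force weight $1/10$ on $\ket{0000}\ket{00}$, $\ket{0000}\ket{11}$, $\ket{1111}\ket{00}$ and $\ket{1111}\ket{11}$. They also force the weight of $\ket{1_L}$ on $\ket{e_k}\ket{01}$ to equal that of $\ket{0_L}$ on $\ket{1111\oplus e_k}\ket{10}$. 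But they leave free the four weights on $\ket{1111\oplus e_k}\ket{10}$ (which sum to $2/5$) and two combinations of the six weights on $\ket{e_k\oplus e_l}\ket{01}$. So the LP polytope is five-dimensional. The equations that cut it down are off-diagonal and bilinear in amplitudes together with their phases, so ``first moduli, then phases'' cannot be staged.

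\emph{No spanning-tree gauge fixing.} The only continuous gauge is $\theta\in\T^6$ plus the two logical phases: eight parameters acting on twenty-four phases. Sixteen phase parameters, not just cycle phases, must be fixed by the KL equations themselves.

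The paper settles this step by an exact analysis of the class-36 amplitude and phase equations. An argument of the following kind is what your proposal lacks. Let $\gamma_k$ and $\delta_{kl}$ be the amplitudes of $\ket{0_L}$ on $\ket{1111\oplus e_k}\ket{10}$ and $\ket{e_k\oplus e_l}\ket{01}$.

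\emph{Step 1.} The cross-sector conditions $\bra{0_L}E\ket{1_L}=0$ for $E=X_k$, $X_kZ_5$, $X_kZ_l$ ($l\le4$, $l\ne k$), $X_kX_5$ and $X_kX_6$ pin $\ket{1_L}$ to $\ket{0_L}$. Up to one unimodular factor $\lambda$ and a sign on $\ket{1111}\ket{00}$ and $\ket{1111}\ket{11}$, $\ket{1_L}$ is the complex conjugate of $X^{\otimes6}\ket{0_L}$. This holds wherever the relevant $\gamma_k$ are nonzero.

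\emph{Step 2.} After this substitution the within-sector $X_kX_l$ conditions are automatic. What remains is $\sum_{l\ne k}\delta_{kl}\gamma_l=0$, together with $\delta_{12}\delta_{34}+\delta_{13}\delta_{24}+\delta_{14}\delta_{23}=0$ from $\bra{0_L}X_5X_6\ket{1_L}=0$.

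\emph{Step 3.} Hence $y_{kl}:=\gamma_k\gamma_l\delta_{kl}$ has zero vertex sums on $K_4$, so it is constant on the three perfect matchings. Its values $p,q,r$ satisfy $p+q+r=0=p^2+q^2+r^2$, so $\{p,q,r\}=\{\rho,\rho\omega,\rho\omega^2\}$. This is where the cube roots come from.

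\emph{Step 4.} Then $|\delta_{kl}|=|\rho|/(|\gamma_k||\gamma_l|)$. Together with the moment identities $\sum_{l\ne k}|\delta_{kl}|^2=\tfrac1{10}+|\gamma_k|^2$, this forces $|\gamma_k|^2=\tfrac1{10}$ and $|\delta_{kl}|^2=\tfrac1{15}$. Those identities, together with $\sum_{l}\delta_{kl}\gamma_l=0$, also exclude every stratum with some $\gamma_k=0$.

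Only after this do the eight-parameter gauge and your transposition of qubits~3 and~4 finish the proof. Without such an exact elimination, uniqueness is asserted rather than proved.
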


Projective logical order five forces a controlled physical representative
of order five.  Exact exclusions leave only class~36, whose amplitude and
phase equations give rigidity under bit flips, diagonal phase gates, and
qubit permutations.  See the proof of the order-five uniqueness theorem
in the Supplementary Materials.

\begin{corollary}[Minimum length for a projective order-five transversal gate]
\label{cor:six-C5-minimum-length}
If an exact \(((n,2,d))\) code with \(d\ge3\) supports a single-block
transversal unitary of projective logical order five, then \(n\ge6\).
Equality is attained by the code in
Eqs.~\eqref{eq:six-C5-code0}--\eqref{eq:six-C5-code1}, and every
six-qubit code with this property is equivalent to it under local
unitaries and qubit permutations.
\end{corollary}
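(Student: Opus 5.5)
The plan is to reduce an arbitrary single-block transversal unitary to the diagonal setting already treated, rule out every length below six, and then invoke Theorem~\ref{thm:unique-C5-LU} at length six.

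\emph{Reduction to the diagonal case.} Let $U=\bigotimes_{j=1}^nU_j$ preserve an exact $((n,2,d))$ code with $d\ge3$, and suppose $\overline U$ has projective order five. Each $U_j\in U(2)$ is unitarily diagonalizable, say $U_j=V_j\Lambda_jV_j^\dagger$. Conjugating the code by $\bigotimes_jV_j^\dagger$ does three things:
\begin{itemize}
\item it preserves all Pauli weights, and hence the distance and the Knill--Laflamme conditions;
\item it turns $U$ into a product gate diagonal in the computational basis;
\item it leaves the induced projective logical action, and therefore its order, unchanged.
\end{itemize}
So the transformed code has a projective diagonal logical element of order five in the fixed product basis, and $\mathcal G_{\mathrm{diag}}(P)$ contains an element of order five.

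\emph{Excluding $n\le5$.} The quantum Singleton bound \eqref{eq:quantum-Singleton} with $K=2$ and $d\ge3$ gives $n\ge5$, so only $n=5$ remains.

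For $n=5$ there are two possible routes. The first is structural: use the classical fact that every $((5,2,3))$ code is local-unitarily equivalent to the perfect $[[5,1,3]]$ stabilizer code. Its transversal logical group is contained in the single-qubit Clifford group, whose projective image has order $24$. Since $5\nmid 24$, no element of projective order five can occur.

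The second route is self-contained and avoids citing the transversal group of the five-qubit code. Apply Corollary~\ref{cor:d3-physical-order-bound}: it gives a reduced representative $D(a;q)$ with $m=5\mid q$ and $q\le\beta_5=5$, so $q=m=5$. Corollary~\ref{cor:rank-two-sslp} then leaves a finite list of sorted vectors $a\in\Z_5^5$ with logical supports in $S_0(a;5)$ and $S_1(a;5)$. For each $a$ on this list I would first run the LP test on first and second moments. I would also impose the affine-span condition of Lemma~\ref{lem:positive-affine-moments}, which together with $q\le\beta_r$ forces the common affine dimension to be $r=5$. Any surviving candidates would then be excluded by exact Knill--Laflamme no-go certificates, as in the six-qubit analysis.

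\emph{Attainment and uniqueness at $n=6$.} When $n=6$, the Singleton bound forces $d=3$ exactly. The code \eqref{eq:six-C5-code0}--\eqref{eq:six-C5-code1} attains the bound via \eqref{eq:six-C5-logical-action}. For uniqueness, take any six-qubit code with a transversal gate of projective order five. After the diagonalizing conjugation above, it is an exact $((6,2,3))$ code with a projective diagonal element of order five. Theorem~\ref{thm:unique-C5-LU} places it in class~36 and makes it local-unitarily equivalent, up to qubit permutations, to the displayed code. Composing this with the diagonalizing local unitary gives the claimed equivalence.

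The main obstacle is the $n=5$ exclusion. The cleanest path depends on external results, namely uniqueness of the $((5,2,3))$ code and Clifford containment of its transversal group. I would try that path first and keep the finite SSLP enumeration at $q=5$ as a self-contained certificate.
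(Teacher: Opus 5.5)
Your diagonal reduction, the Singleton step, and the $n=6$ attainment and uniqueness argument are fine and agree with the paper. The gap is the exclusion of $n=5$: neither of your routes actually proves it.

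\textbf{Route 1.} It rests on the claim that every transversal logical gate of the five-qubit code, with arbitrary and possibly non-Clifford local factors, is a logical Clifford. Rains's LU-uniqueness of the $((5,2,3))$ code is a genuine theorem. The Clifford containment, however, is not proved in the paper. The Clifford-hierarchy result the paper cites for stabilizer codes concerns gates diagonal in the basis in which the code is a stabilizer code, and your diagonalizing conjugation does not preserve that form. The paper's own remark that below six qubits the transversal group conjugates into the Clifford group comes after this corollary. It must rely on the corollary to rule out order five, since the bound $q\le\beta_5=5$ with $m\mid q$ still permits $m=5$.

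\textbf{Route 2.} This is a legitimate finite procedure, but you stop at ``any surviving candidates would then be excluded.'' The LP and exact Knill--Laflamme no-go certificates at $n=5$ are precisely the content that would need to be supplied, and you supply none.

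\textbf{The missing idea.} No separate five-qubit analysis is needed. After diagonalizing, pad the short code with $6-n$ ancillas in $\ket{0}$, setting $\mathcal Q'=\mathcal Q\otimes\ket{0}^{\otimes(6-n)}$ with projector $P'$. For a Pauli $E=E_1\otimes E_2$, with $E_2$ on the ancillas,
\[
P'EP'=\bra{0\cdots0}E_2\ket{0\cdots0}\,(PE_1P)\otimes\ket{0\cdots0}\bra{0\cdots0}.
\]
The scalar vanishes unless $E_2\in\{I,Z\}^{\otimes(6-n)}$. Hence every error of weight at most two is detected, and $\mathcal Q'$ is an exact $((6,2,3))$ code. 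The gate $D(\theta)\otimes I$ induces the same order-five diagonal logical action on $\mathcal Q'$.

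Theorem~\ref{thm:unique-C5-LU} then makes $\mathcal Q'$ equivalent, under local unitaries and qubit permutations, to the code of Eqs.~\eqref{eq:six-C5-code0}--\eqref{eq:six-C5-code1}, which has $A_1=0$. But each ancilla $Z_j$ satisfies $\tr(Z_jP')=2$, so it contributes $1$ to $A_1(P')$, giving $A_1(P')\ge 6-n\ge1$. Since $A_1$ is invariant under local unitaries and permutations, this is a contradiction, and $n\ge6$ follows directly from results already established. This is the paper's argument.
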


The vanishing of \(A_1\) also explains the minimum length.  After locally
diagonalizing each transversal factor, a shorter code could be padded
with fixed ancillas to obtain a six-qubit code with the same order-five
logical action.  Each ancilla contributes one to \(A_1\), contradicting
the unique order-five code's \(A_1=0\).  The supplementary minimum-length
corollary for projective order five gives the full argument.

The same minimum applies to transversal non-Clifford logical gates in
the logical-basis-independent sense stated in the Introduction.  The
discussion following the supplementary minimum-length corollary for
projective order five combines the physical-order
bound with the classification of finite rotation groups: below six qubits, the full projective transversal logical group can always be conjugated into the Clifford group.

For projective logical orders two and three, classification at each fixed
enumerator value still requires determining the amplitude families, their
connected components, and their local-unitary equivalences.  Additional
symmetries at special parameter values and overlaps between different
symmetry classes are part of this finer problem.  Class~36 is the only
sector with a complete code-level classification.

\section{Discussion}
\label{sec:agentic-reproducibility}

The physical-order bounds make the symmetry data finite while allowing
continuous families of exact codes.  For exact qubit codes of distance at
least two, every nontrivial diagonal logical action in a fixed product
basis has a physical representative within a uniform finite range of
orders at fixed code parameters.
The SSLP normal form then places the code amplitudes in labeled subset-sum
fibers, where the full Knill--Laflamme equations determine the exact
completions.  Thus the finite outer enumeration and the continuous
completion problem can be treated separately without assuming stabilizer
or additive structure.

For exact \(((6,2,3))\) codes, this procedure reduces \(66\) candidate
symmetry classes to \(15\) realizable classes, with nontrivial projective
diagonal logical orders precisely \(\{2,3,5\}\).  We determine thirteen of
their fifteen weight-enumerator sets exactly.  Distinct classes can share
an enumerator set.  A classification up to local unitaries additionally
requires determining the exact code families within and across these
classes.  The order-five sector
admits the stronger conclusion: every exact completion belongs to one code
class up to local unitaries and qubit permutations.  This code also attains
the minimum length of six for an exact code encoding one logical qubit at
distance at least three and supporting a single-block transversal logical
action of projective order five.

Within the nontrivial diagonal sector, determining
\(\mathcal W_1\) and \(\mathcal W_4\) would complete the weight-enumerator
description.  A complete code-level classification of the order-two and
order-three classes must still determine the amplitude families at each
fixed enumerator value, their connected components, and local-unitary
equivalence among codes with the same enumerator inside and across classes.
These tasks also require identifying additional symmetries at special
parameter values.

The conjecture in Eq.~\eqref{eq:global-WE-conjecture} concerns the
unrestricted space of exact \(((6,2,3))\) codes.  A proof additionally
requires treating codes whose projective diagonal logical group is trivial.
Beyond this six-qubit problem, further directions include efficient
searches at larger code dimensions, stronger physical-order bounds at
larger distances, and transversal symmetries that cannot be diagonalized
in one common product basis.

The structural theorems were derived by the human authors and subsequently
formalized and verified with human guidance.  The supplementary enumeration,
case-by-case proofs, exact certificates, and Lean verification code were
produced by AI agents using TeXRA with GPT-5.6 and the Lean verification
workflow developed in Ref.~\cite{HeLuZeng2026}.  Humans audited the completed
supplementary results. The Supplementary Materials~\cite{SSLPClassificationSupplement}
and accompanying code, data, exact certificates, and Lean sources are available in the companion repository, together with instructions for reproducing the results.

\section*{Acknowledgments}
We thank Sirui Lu and Shawn Cui for helpful discussions.

\bibliographystyle{unsrt}
\bibliography{references}

\end{document}